\documentclass[copyright,creativecommons,noderivs]{eptcs}

\title{Separation of Test-Free Propositional Dynamic Logics over Context-Free Languages}

\author{Markus Latte\footnote{Supported by the \acronym{DFG} Graduiertenkolleg 1480 (\acronym{PUMA}).} \institute{Department of Computer Science \\University of Munich, Germany}}

\usepackage{microtype}
\usepackage{relsize}
\usepackage{amsmath,amssymb,amsthm}
\usepackage{xspace}
\usepackage{array}
\usepackage{pgf}

\input{theorems.def}
\input{math.def}
\input{prelim.def}
\input{language.def}
\input{measure.def}
\input{emptyword-free.def}
\input{elimA.def}
\input{intro.def}

\begin{document}
\maketitle

\begin{abstract}
For a class $\mathfrak L$ of languages let \PDL[$\mathfrak L$] be an
extension of Propositional Dynamic Logic which allows programs to be
in a language of $\mathfrak L$ rather than just to be
regular.  If $\mathfrak L$ contains a non-regular language,
\PDL[$\mathfrak L$] can express non-regular properties, in contrast to
pure PDL.

  For regular, visibly pushdown and deterministic context-free
languages, the separation of the respective \PDL{}s can be proven by
automata-theoretic techniques.  However, these techniques introduce
non-determinism on the automata side.  As non-determinism is also the
difference between \DCFL and \CFL, these techniques seem to be
inappropriate to separate \PDL[\DCFL] from \PDL[\CFL].  Nevertheless, 
this separation is shown but for programs without test operators.
\end{abstract}

%
%
%

%
%
%
%
%
%
%
%

\section{Introduction} %
\label{sec:intro}

Propositional Dynamic Logic (\acronym{PDL})~\cite{FischerLadner79} is a logical formalism to 
specify and verify programs~\cite{DynamicLogic,LogicsOfPrograms90,HandbookPhil:Harel:DynamicLogic}.
These tasks rely on the satisfiability and model-checking problems.
Applications in the field are supported by their relatively low complexities: \EXPTIME- and \PTIME-complete, respectively~\cite{FischerLadner79}.

Formulas in \acronym{PDL} are interpreted over labeled transition systems.
For instance, the formula $\langle p \rangle \varphi$ means that
after executing the program $p$ the formula $\varphi$ shall hold.  In this context, programs
and formulas are defined mutually inductively.  This mixture allows programs to test whether or not a formula holds at
the current state.  Additionally, programs are required to be regular over the set of atomic programs and test operations.
For instance, the program {\tt while (b) do p;} can be rendered as $\langle (b? ; p)^\ast ; \neg b\rangle \varphi$
to ensure that the loop is finite and that $\varphi$ holds when the loop terminates~\cite{FischerLadner79}.

The small model property of \acronym{PDL}~\cite{FischerLadner79} cuts both ways.  
First, it admits a decision procedure for satisfiability, but secondly it
restricts the expressivity to regular properties.  As a consequence counting properties and, in particular,
the nature of execution stacks cannot be expressed.  The last consequence runs contrary to the verification of recursive programs.

A natural way to enhance the expressivity is to relax the regularity requirement.
For a class $\mathfrak L$ of languages let \PDL[$\mathfrak L$] denote the variation which requires that
any program belongs to $\mathfrak L$\footnote{
If test operations and deterministic languages are in involved, the test operations also
must behave deterministically.  In the case of \DCFL{}s the additional restriction reads as follows (using the notation in~\cite{HopcroftUllman79}).
\begin{list}{$\bullet$}{%
     \setlength{\itemsep}{-.5ex}
     \setlength{\parsep}{3pt}
     \setlength{\topsep}{1pt}
     \setlength{\partopsep}{0pt}
     \setlength{\leftmargin}{1.5em}
     \setlength{\labelwidth}{1em}
     \setlength{\labelsep}{0.5em} }
\item For any state $q$, at most one of $\delta(q, a, X)$ (for $a \in \Sigma$), $\delta(q, \emptyword, X)$
      and $\delta(q, \varphi?, X)$ (for some \PDL[$\mathfrak L$]-formula $\varphi$) is not empty.
\item For any state $q$ and two distinct \PDL[$\mathfrak L$]-formulas $\varphi_1$ and $\varphi_2$, we have that if
      $\delta(q, \varphi_1?, X) \neq \emptyset$ and $\delta(q, \varphi_2?, X) \neq \emptyset$
      then $\varphi_1$ and $\varphi_2$ are semantically disjoint, that is $\models \neg (\varphi_1 \wedge \varphi_2)$.
\end{list}
Otherwise, it would be possible to simulate a non-deterministic choice by inserting a test for ``true'' for every possible choice
and vary each test syntactically in a different way.  Note that ``true'' has infinitely many synonyms.
A non-example is \DeltaTestPDL[\CFL] $=$ \DeltaTestPDL[\DCFL] in \cite{nonregctl_lpar_2010}.%
}.
For instance, we write a diamond as $\langle L \rangle \varphi$
for $L \in \mathfrak L$.
This leads to a hierarchy of logics.  Obviously, \PDL[$\mathfrak L$] $\leq$ \PDL[$\mathfrak M$] holds
for $\mathfrak L \subseteq \mathfrak M$.
Besides regular languages, we consider the variations for the class of visibly pushdown languages~\cite{AM04}, \VPL,
the class of deterministic context-free languages, \DCFL, and context-free languages, \CFL.
The inclusion order continues on the logics' side.
\begin{align}\label{eq:PDL chain}
  \text{\PDL}
  =
  \text{\PDL[\REG]}
  \leq
  \text{\PDL[\VPL]}
  \leq
  \text{\PDL[\DCFL]}
  \leq
  \text{\PDL[\CFL]}
  \text.
\end{align}
Harel et al.\ discussed the effect of adding \emph{single} (deterministic) context-free programs
to \acronym{PDL}~\cite{HarelPnueliStavi83,HarelRaz90,DynamicLogic}.  The logic \PDL[\VPL]
were introduced by L\"oding et al.~\cite{LLS07}.

To handle the respective decision problems, the languages are represented by a machine model for the respective class.
For each of these logics, any of its formula $\varphi$
can be translated into an $\omega$-tree-automaton which recognizes exactly
all tree-like models of $\varphi$ where the out-degree of any node is globally bounded.
Such a model exists iff $\varphi$ is satisfiable. %
For \PDL and \PDL[\REG] these tree-automata are finite-state~\cite{VardiWolper86},
for \PDL[\VPL] they are visibly pushdown tree-automata~\cite{HarelRaz90,LLS07} and
for \PDL[\DCFL] and \PDL[\CFL] they are tree-automata with unbounded number of stacks.
The last notion is rather artificial.  However, the stacks are used,
first, to accumulate unfulfilled eventualities and to simulate the complementation 
of programs given as pushdown automata.
Note that in the setting of visibly pushdown automata, only one stack suffices as 
$\omega$-\VPL{}s are closed under complementation~\cite{AM04}
and under determinisation (for stair-parity conditions)~\cite{conf/fsttcs/LodingMS04}.

The first two inequalities in~\eqref{eq:PDL chain} are strict.
In this paragraph we sketch the proofs for the first two inequalities.
Consider the language $L := \{c^n r^n \mid n \in N\}$ over
an alphabet $\Sigma \supseteq \{c, r\}$.  Hence, we have $L \in \text{\VPL}$ 
if we take $c$ for a call and $r$ for a return in a visibly pushdown alphabet for $\Sigma$.
Now, we claim that $\vartheta {:=} \langle L \rangle p$ is not expressible in \PDL[\REG] where $p$ is a proposition.
For the sake of contradiction, assume that there were such a formula.  Restricted to linear models, 
the previous translation leads to a finite-state \Buchi-automaton~$\mathcal A$ which recognizes those models.
Let $N$ be sufficiently large---which depends on the pumping length and the, here omitted, encoding.
Consider the following model of $\vartheta$ for $N'=N$.
\\
\hspace*{2em}
\begin{pgfpicture}{-1.0cm}{-1cm}{10cm}{1.0cm}
  \pgfsetlinewidth1pt
  \pgfsetendarrow{\pgfarrowto}

  \pgfxyline(-0.7,0)(-0.3,0)
  \mytext(0.0,0)[center,center]{$\neg p$}  \pgfcircle[stroke]{\pgfxy(0,0)}{2.5mm}
  \pgfxyline(0.3,0)(1.2,0) \mytext(0.75,0)[center,bottom]{$c$}
  \mytext(1.5,0)[center,center]{$\neg p$}  \pgfcircle[stroke]{\pgfxy(1.5,0)}{2.5mm}
  \pgfxyline(1.8,0)(2.7,0) \mytext(2.25,0)[center,bottom]{$c$}
  \mytext(3.0,0)[center,center]{\everymath{}$\cdots$}  
  \pgfxyline(3.3,0)(4.2,0) \mytext(3.75,0)[center,bottom]{$c$}
  \mytext(4.5,0)[center,center]{$\neg p$}  \pgfcircle[stroke]{\pgfxy(4.5,0)}{2.5mm}
  \pgfxyline(4.8,0)(5.7,0) \mytext(5.25,0)[center,bottom]{$r$}
  \mytext(6.0,0)[center,center]{$\neg p$}  \pgfcircle[stroke]{\pgfxy(6.0,0)}{2.5mm}
  \pgfxyline(6.3,0)(7.2,0) \mytext(6.75,0)[center,bottom]{$r$}
  \mytext(7.5,0)[center,center]{\everymath{}$\cdots$}  
  \pgfxyline(7.8,0)(8.7,0) \mytext(8.25,0)[center,bottom]{$r$}
  \mytext(9.0,0)[center,center]{$p$}  \pgfcircle[stroke]{\pgfxy(9.0,0)}{2.5mm}

  \pgfmoveto{\pgfxy(9.2,0.2)}
  \pgfcurveto{\pgfxy(10.5,1)}{\pgfxy(7.6,1)}{\pgfxy(8.6,0.2)}
  \pgfstroke
  \mytext(9.0,0.7)[center,top]{$r$}

  \mytext(2.0,-0.2)[center,top]{\everymath{}$\underbrace{\hspace*{5.5cm}}$} 
  \mytext(2.0,-0.7)[center,top]{$N'$ times}

  \mytext(6.5,-0.2)[center,top]{\everymath{}$\underbrace{\hspace*{5.5cm}}$} 
  \mytext(6.5,-0.7)[center,top]{$N$ times}

\end{pgfpicture}
\\
As $\mathcal A$ accepts this model, it also accepts this transition system for $N'<N$
due to the pumping lemma.  However, this structure is not a model of $\vartheta$.
The separation of \PDL[\VPL] and \PDL[\DCFL] can be achieved in similar fashion.
Take as program $L {:=} \{w \sharp w^R \mid w \in (\Sigma \setminus \{\sharp\})^\ast \} \in \text{\DCFL}$
over an alphabet $\Sigma \ni \sharp$.  For any visibly pushdown alphabet for
$\Sigma$ its return-part is not empty in general.  Using such a letter for the
$w$-part in $L$, an assumed visibly pushdown automaton for $\langle L \rangle p$
operates on that part like a finite-state automaton.  The same argumentation applies as for the first separation.

The separation for the last inequality in~\eqref{eq:PDL chain} is more cumbersome and intrinsic:
For the satisfiability problem, the emptiness problem for finite-state and 
for visibly pushdown tree-automata is decidable~\cite{Rabin:70,VardiWolper86,Thomas:InfiniteObjects90}\cite{conf/fsttcs/LodingMS04}.
The emptiness problem for the tree-automata with an unbounded number of stacks
can be considered as the halting problem for \Buchi-Turing machines~\cite{StaigerHandbook}.
Indeed, the satisfiability problems for \PDL[\DCFL] and \PDL[\CFL]
are $\Sigma^1_1$-complete~\cite{HarelPnueliStavi83}.
Hence, both logics are not distinct by a ``trivial'' reason.

The standard translation~\cite{VardiWolper86,HarelRaz90,LLS07} from formulas to tree-automata bases on Hintikka-sets.
For a fixed formula $\vartheta$ and for every node of the given transition system 
the automaton for $\vartheta$ guesses---among other things---the set of those subformulas of $\vartheta$ which hold at that node.
Informally speaking, the non-determinism is required to handle disjunctions in the given formula
and to recognize the termination of a program in an expression such as $\langle L \rangle \varphi$.
Note that a language in \DCFL might be not prefix-free.
However, non-determinism is also the difference between \DCFL and \CFL.
Hence, the translation seems not to suffice to separate \PDL[\DCFL] from \PDL[\CFL].

In this paper we make a step towards the separation of \PDL[\CFL] from \PDL[\DCFL].
For technical reasons we consider \PDL[$\mathfrak L$] without the test operations like $\varphi?$---call the logic \PDLplain[$\mathfrak L$]---and prove
the separation of the corresponding logics.  
This restriction is proper as \PDLplain is weaker than \PDL~\cite{Berman81}.
Note that \PDLplain[$\mathfrak L$] is exactly the $\EF$/$\AG$-fragment of \XCTL{$\mathfrak L$}~\cite{nonregctl_lpar_2010,nonregctl_arxiv_2010}.
This logic is obtained from \CTL by restricting the moments of until- and release-operations by
languages in $\mathfrak L$.
The separation of \XCTL{\DCFL} and \XCTL{\CFL} is unknown as well.

%

\section{Preliminaries}
\label{sec:prelim}

Let $\Sigma$ be an alphabet.  For a finite word $w \in \Sigma^\ast$ we
write $|w|$ for its length and $w\substr{i}{j}$ for its subword starting at
index $i$ and ending at index $j$ where $0\leq i \leq j < |w|$. Both
indices are zero-based.  For words $u, v\in \Sigma^\ast \cup \Sigma^\omega$ 
their concatenation is written as $uv$ and the reversal of $u$ as $u^R$.  
Concatenation is extended to sets in the usual way.
The empty word is denoted by $\emptyword$.  
A word $u \in \Sigma^\ast \cup \Sigma^\omega$ is a (proper) prefix of $w \in \Sigma^\ast \cup \Sigma^\omega$ 
iff there is $v \in \Sigma^\ast \cup \Sigma^\omega$ such that $uv=w$ (and $v\neq\emptyword$).
The notation of a suffix is defined similarly. 
For two languages $L_1$ and $L_2$ their left quotient
$L_1 \setminus L_2$ is $\{v \mid \exists u \in L_1 . uv \in L_2 \}$.
If one of both languages is a singleton we may
replace the language by its single word.
Standard notations are used~\cite{HopcroftUllman79} for
(deterministic) pushdown automata on finite words, \DPDA and \PDA, and (deterministic)
context-free languages.  Deterministic pushdown automata on
$\omega$-words, $\omega$\DPDA, are equipped with \Buchi-acceptance
conditions~\cite{StaigerHandbook}.

Let $\Prop = \{p,q,\ldots\}$ be a set of \emph{propositions}. %
A \emph{labeled transition system}, \LTS, is a triple $\mathcal T = (\States, \Transition{}, \ell)$
consisting of a set of states $\States$, of a labeled edge relation $\Transition{} \subseteq \States \times \Sigma
\times \States$ and of an evaluation function $\ell:\States \to 2^\Prop$.  We write
$s \Transition{a} t$ instead of $(s,a,t) \in \Transition{}$.  
A \emph{path} is a sequence $s_0,a_1,s_1,a_1,\ldots a_{n-1},s_n$ for some $n \in \Nat$ such that
$s_i \Transition{a_{i+1}} s_{i+1}$ for all $i \in \{0, \ldots, n-1\}$.
For such a path we may write $s_0 \Transition{a_0} s_1 \Transition{a_1} \ldots \Transition{a_{n-1}} s_n$.
A structure is a pair $\mathcal M = (\mathcal T, s)$ of an \LTS and a state in it, called \emph{root}.
Previous notations for \LTS{s} are also used for structures.
A structure $\mathcal M = ((\States', \Transition{}', \ell'), s')$ is an \emph{extension} of $\mathcal M$, written as $\mathcal M \leq \mathcal M'$,
iff $\States \subseteq \States'$, $\Transition{} \subseteq \Transition{}'$, $\ell$ is the restriction of $\ell'$ to $\States$, and $s=s'$.

Let $\mathfrak L$ be a class of languages.
We define the logic \PDLplain[$\mathfrak L$] in negation normal form using a \acronym{CTL}-like syntax~\cite{nonregctl_lpar_2010}---%
that is, $\EF^L \varphi$ stands for the \PDL-expression $\langle L \rangle \varphi$ for instance.
The formulas are given by the grammar
\begin{displaymath}
  \varphi \enspace ::= \enspace \enspace \False \enspace \mid \enspace \True \enspace \mid 
    \enspace p \enspace \mid \enspace \neg p \enspace \mid 
    \enspace \varphi \vee \varphi \enspace \mid \enspace \varphi \wedge \varphi \enspace \mid 
    \enspace \EF^{L} \varphi \enspace \mid \enspace \AG^{L} \varphi
\end{displaymath}
where $p \in \Prop$ and $L \in \mathfrak L$.
Such formulas are denoted by $\varphi$, $\psi$, $\vartheta$, and $\delta$.
The atoms $\False$ and $\True$ are called \emph{constants}, and $p$ and $\neg$ are called \emph{literals}.
Implication and equivalence are definable.
A formula $\EF^L \varphi$ is called \emph{$\EF$-formula}.  An \emph{$\AG$-formula} is meant analogously.
A formula is interpreted over a structure as follows. %
\\[1ex]
\begin{tabular}[t]{@{\qquad}l}
$\mathcal T, s \not\models \False$ \qquad\qquad%
$\mathcal T, s \models \True$ \qquad\qquad%
$\mathcal T, s \models p$ iff $p \in \ell(s)$ \qquad\qquad%
$\mathcal T, s \models \neg p$ iff $p \not\in \ell(s)$\\[1ex]
$\mathcal T, s \models \varphi_1 \vee \varphi_2$ iff $\mathcal T, s \models \varphi_1$ or $\mathcal T, s \models\varphi_2$\qquad\qquad%
$\mathcal T, s \models \varphi_1 \wedge \varphi_2$ iff $\mathcal T, s \models \varphi_1$ and $\mathcal T, s \models\varphi_2$\\[1ex]
$\mathcal T, s \models \EF^L \varphi$ iff 
there is path $s_0 \Transition{a_0} s_1 \Transition{a_1} \ldots \Transition{a_{n-1}} s_n$ with $s=s_0$,\;
$a_0 \cdots a_{n-1} {\in} L$ and $\mathcal T, s_n \models \varphi$
\\[1ex]
$\mathcal T, s \models \AG^L \varphi$ iff 
for all paths $s_0 \Transition{a_0} s_1 \Transition{a_1} \ldots \Transition{a_{n-1}} s_n$ with $s=s_0$
and $a_0 \cdots a_{n-1} \in L$: $\mathcal T, s_n \models \varphi$
\end{tabular}
\\[1ex]
If $\mathcal T, s \models \varphi$ then the structure $(\mathcal T, s)$ is a \emph{model} of $\varphi$.
A structure $(\mathcal T, s)$ is \emph{tree-like} iff $\mathcal T$ forms a tree with root $s$.
Since \PDLplain[$\mathfrak L$] is closed under bisimulation, every satisfiable formula has 
a tree-like structure as a model.  A formula $\varphi$ is a \emph{tautology}, written as $\models\varphi$,
iff every structure is a model of $\varphi$.

\section{Outline of the Proof}
\label{sec:goal}

For the following parts, fix an alphabet $\Sigma$ which at least contains $0$ and $1$ 
but not $\$$, and set $\Sigma_\$ := \Sigma \cup \{\$\}$.
The language of palindromes is denoted by $\Palindromes := \{ w \in \Sigma^\ast  \mid w = w^R \}$.
We will show that there is no \PDLplain[\DCFL]-formula
which is equivalent to the \emph{reference} \PDLplain[\CFL]-formula
$\EF^{\Palindromes \; \$} \True$.
As the reference formula does not contain propositions we
may assume that neither does any equivalent formula.
Equivalently, we may assume that $\Prop = \emptyset$.

For the sake of contradiction, 
let $\vartheta \in \text{\PDLplain[\DCFL]}$ be a \emph{candidate} formula which is assumed 
to be equivalent to $\EF^{\Palindromes \; \$} \True$. 
To illustrate the main problem about provoking a contradiction, 
we begin with a simpler setting in which $\vartheta$ does not contain any conjunctions or $\AG$-formulas.
As we have the equivalences
\begin{align*}
  \EF^{L} \False &\leftrightarrow \False\text,
&
  \EF^{L_1}\EF^{L_2}\psi &\leftrightarrow \EF^{L_1L_2}\psi\text{, and}
&
  \bigvee_i \EF^{L_i} \psi &\leftrightarrow \EF^{\bigcup_i L_i} \psi
\end{align*}
the formula $\vartheta$ can be rewritten as 
\begin{align*}
  \EF^{\bigcup_i L_{i,1} \cdots L_{i,n_i}} \True
\end{align*}
where $L_{i,j}$ are \DCFL{}s over $\Sigma_\$$.  
In general, an equivalence $\EF^{\Palindromes \; \$} \True \leftrightarrow \EF^L \True$ implies
\[
  \Palindromes = \{w \in \Sigma^\ast \mid w\$ \text{ is a prefix of a word in }L\}
\]
for $L\subseteq \Sigma_\$^\ast$.
Therefore, we have that $\Palindromes$ would be expressible as
a finite union over a finite concatenation over \DCFL{}s over $\Sigma$.
Some combinatorial argument shows that this is impossible.

Back to the real world, we are also faced with conjunctions and $\AG$-formulas in $\vartheta$.
A natural attempt is to eliminate these subformulas.  Indeed, a conjunction
seems not to support a statement which speaks about a single path only.
Instead, it speaks about a bunch of paths.  Similarly, an $\AG$-formula
is not monotone with respect to models but the reference formula is monotone.
To turn off such formulas, one could saturate the considered structures
with substructures which falsify $\AG$-formulas and
which do not affect the desired property $\EF^{\Palindromes \; \$}\True$.
However on such a new structure, the attached substructures could be
recognized by other $\EF$-subformulas.  But these subformulas 
need not to be concerned with palindromes in any reasonable way.
Moreover, Boja\'{n}czyk proved~\cite{Bojanczyk:commonFragment}%
---for the dual setting---that such an elimination procedure is only possible
if---in our setting---palindromes were expressible as a finite union
of languages of the form $A_0^\ast a_1 A_1^\ast a_2 \cdots A_{n-1}^\ast a_n A_n^\ast$  %
for $a_1, \ldots, a_n \in \Sigma$ and $A_0, \ldots, A_n \subseteq \Sigma$.
Obviously, this is not the case.

Therefore, our strategy is different. %
First, we show that topmost $\AG$-formulas and topmost conjunctions can be eliminated 
(\S~\ref{sec:emptyword-free} and~\ref{sec:elim}).
This renders the candidate formula $\vartheta$ as $\bigvee_i \EF^{L_i} \psi_i$ for some $L_i \subseteq \Sigma_\$^+$
and some formulas $\psi_i$ with unknown structure.
Secondly, if $L_i$ is not a singleton language then the formula $\EF^{L_i} \psi_i$ per se provides all the information required for a contradiction.  Either it under- or over-approximates palindromes.
And if $L_i$ is a singleton we proceed in a similar way with the left-quotient of $\vartheta$ with
the only word in $L_i$.
The whole procedure (\S~\ref{sec:extraction}) terminates through a sophisticated measure (\S~\ref{sec:measure}).
The case that $L_i$ is not a singleton give rise to a characterization of languages
which will bridge between the formula and the language part of the separation proof.

\begin{definition}\label{def:good}
  A language $L \subseteq \Sigma^\ast$ is \emph{good} iff
  $%
    L = \bigcup_{i \in I} L_i R_i
  $ %
  such that $I$ is finite, and for each $i \in I$, the language $L_i$ is a \DCFL, $|L_i| \geq 2$ and $R_i \subseteq \Sigma^\ast$.
\end{definition}

In the view of Boja\'{n}czyk's result, our iterated elimination is non-uniform compared to the preferable approach in the previous paragraph. 
Finally, we show on the language-theoretical level that palindromes
are not good (\S~\ref{sec:palindromes_notin_dcfl}).

%
%
%

\section{On Palindromes and \texorpdfstring{\DCFL{}s}{DCFLs}}
\label{sec:palindromes_notin_dcfl}

In this section it is proven that the language of palindromes is not good.
For this purpose we first show that this language is not expressible as
a union of \DCFL{}s (Theorem~\ref{thm:dcfl neq palindromes}).  
Although it is know that the set of palindromes is not deterministic context-free,
the standard proof~\cite[Cor.~1]{GinsburgGreibach:dcfl} does not seem to be adaptable 
because the applied $\min$-operator does not commutate with the union.
As a second step, it is shown that 
if palindromes are underapproximated by a concatenation
then the components of the concatenation follow a very simple pattern (Lemma~\ref{lem:boundR}).


\begin{lemma}[Pumping lemma]\label{lem:pumpinglemma}
  Let $u \in \Sigma^\omega$ be accepted by an \omegaDPDA $\mathcal A$. %
  There are words $u_0 \in \Sigma^\ast$, $u_1 \in \Sigma^+$ and $u_2 \in \Sigma^\omega$ such that
  $u_0 u_1 u_2 = u$, and $u_0 u_2$ is accepted by $\mathcal A$.
\end{lemma}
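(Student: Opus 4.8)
The plan is to analyse the unique run of $\mathcal A$ on $u$ and to locate a finite input factor that can be deleted without disturbing the stack context on which the remainder of the run depends. Write the run as an infinite sequence of configurations $C_0, C_1, C_2, \ldots$, and for each step $t$ let $q_t$ be the state, $h_t$ the stack height, and $A_t$ the topmost stack symbol at $C_t$. Since $u \in \Sigma^\omega$ is consumed in full, infinitely many steps read an input letter; and since $\mathcal A$ accepts $u$ under a \Buchi condition, infinitely many steps carry a final state, in particular beyond every fixed step.

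The central notion is that of a \emph{weak low point}: a step $t$ with $h_s \geq h_t$ for all $s \geq t$. First I would show there are infinitely many of them: the suffix minimum $\min_{s \geq t} h_s$ is a well-defined natural number for every $t$ and is attained at some $s^\ast \geq t$, and any such $s^\ast$ is itself a weak low point $\geq t$. The decisive observation is that, from a weak low point $t$ onwards, the height never drops below $h_t$, so every stack cell strictly below height $h_t$ is permanently buried: it is never the top symbol for any $s \geq t$, hence never consulted or rewritten. Consequently the behaviour of the run after step $t$ --- the states it visits and the moves it performs on the exposed part of the stack --- is determined solely by the pair $(q_t, A_t)$ together with the remaining input, and is entirely independent of the stack contents below height $h_t$.

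Now I would apply the pigeonhole principle to the pairs $(q_t, A_t)$ ranging over the infinitely many weak low points: only finitely many such pairs exist, so one pair $(q, A)$ occurs at infinitely many of them. Because all of $u$ is eventually read, I can choose two such weak low points $t_a < t_b$ between which at least one input letter is consumed. Setting $u_0$ to be the input read before step $t_a$, $u_1$ the input read strictly between $t_a$ and $t_b$ (so $u_0 \in \Sigma^\ast$ and $u_1 \in \Sigma^+$), and $u_2$ the infinite remaining suffix, we obtain $u_0 u_1 u_2 = u$ with $u_2 \in \Sigma^\omega$.

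It remains to verify that $\mathcal A$ accepts $u_0 u_2$. Determinism forces the run on $u_0 u_2$ to agree with the run on $u$ up to step $t_a$, reaching $C_{t_a}$ with state $q$ and top symbol $A$. From there it reads $u_2$, which is exactly the input the original run reads from step $t_b$, where $C_{t_b}$ likewise has state $q$ and top symbol $A$. By the buried-cell observation at the weak low point $t_b$, the original tail depends only on $(q, A)$ and on $u_2$; hence the new run performs the identical sequence of transitions, visiting the same states and never popping below its starting height. In particular it meets final states infinitely often, so $u_0 u_2$ is accepted. I expect the main obstacle to be precisely this buried-cell argument: in the standard model a top symbol may be rewritten in place, so one cannot claim that $A$ itself survives --- only that the portion of the stack strictly below the low-point height is irrelevant --- and one must also ensure, in the presence of $\emptyword$-transitions, that the deleted factor $u_1$ is genuinely nonempty.
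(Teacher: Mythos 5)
Your proof is correct and takes essentially the same route as the paper: your \emph{weak low points} are precisely the paper's stair positions, and excising a nonempty input factor between two such positions that carry the same state and top stack symbol, then invoking determinism to transplant the tail of the run, is exactly the paper's argument. The only cosmetic difference is that the paper first normalizes the \omegaDPDA so that the topmost stack symbol is kept in the control state (so pigeonholing on states alone suffices and consecutive stacks become prefix-comparable), whereas you stay in the standard model, pigeonhole on the pairs $(q_t, A_t)$, and use your buried-cell observation to deal with in-place rewriting of the top symbol.
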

\begin{proof}
  Firstly, we may assume that $\mathcal A$ only erases or pushes symbols from or on the stack
  and never changes the topmost symbol.  Indeed, an \omegaDPDA can keep the topmost element
  of the stack in its control state~\cite[Sect.~10.1]{HopcroftUllman79}. %
  By this restriction, in any run the stacks of two consecutive configuration are comparable with respect to the prefix-order.
  Secondly, consider the infinitely many stair positions in the accepting run of $\mathcal A$ on $u$.
  By a stair position~\cite{conf/fsttcs/LodingMS04} we understand a position such that the current stack content 
  is a prefix of all further stack contents in this run.  
  As the set of states is finite, %
  there are two different stair positions which name
  the same state.  We may assume that a non-empty part of $u$, say $u_1$ with $u=u_0 u_1 u_2$, 
  fits into their gap.
  Hence, this part can be removed.
  By the definition of stairs, the obtained sequence of configurations
  is a run of $\mathcal A$ on $u_0 u_2$.  
  As the modification affects a prefix of $u$ only, $\mathcal A$ also accepts $u_0 u_2$.
\end{proof}

\begin{theorem}\label{thm:dcfl neq palindromes}
  Let $v \in \Sigma^\ast$, $n \in \Nat$, and $L_1, \ldots, L_n$ be \DCFL{}s over $\Sigma$.
  Then $\bigcup_{i=1}^n L_i \neq v \backslash \Palindromes$.
\end{theorem}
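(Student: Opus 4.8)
The plan is to argue by contradiction, combining the pigeonhole principle, a \Buchi reading of a deterministic pushdown automaton, and the pumping lemma for \omegaDPDA{}s (Lemma~\ref{lem:pumpinglemma}) with a purely combinatorial rigidity property of a carefully chosen infinite word. First I reformulate the target set: since $v$ is a single word, $v \backslash \Palindromes = \{w \in \Sigma^\ast \mid vw \in \Palindromes\}$, so an equality $\bigcup_{i=1}^n L_i = v \backslash \Palindromes$ says precisely that the words $w$ for which $vw$ is a palindrome are exactly those accepted by one of finitely many \DCFL{}s. The idea is to feed such an equality an infinite word $u$ all of whose relevant prefixes $w$ satisfy ``$vw$ is a palindrome'', route it through a single automaton, and then pump away a factor to obtain a word that ought to still describe infinitely many palindromes but cannot. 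The finitely many short words $w$ with $|vw|$ small contribute only finitely many elements and play no role in the argument.

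Second, I build the infinite witness independently of the $L_i$. Put $P_0 := v v^R$ and $P_{k+1} := P_k\, x_k\, P_k$ for a non-eventually-periodic sequence $x_0, x_1, \ldots \in \{0,1\}$. Each $P_k$ is a palindrome beginning with $v$, and $P_k$ is a proper prefix of $P_{k+1}$, so the $P_k$ converge to an $\omega$-word $U = v u$ with infinitely many palindromic prefixes $P_k = v w_k$; in particular every $w_k$ lies in $v \backslash \Palindromes$. The lengths $|P_k|$ grow like $2^k$, so the palindromic prefixes are very sparse, and the aperiodicity of $(x_k)$ makes $u$ aperiodic.

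Third, the automaton-theoretic link. Write each \DCFL as $L_i = \Lang{M_i}$ for a \DPDA $M_i$, normalised so that it is complete and scans its whole input. Reading $M_i$ with a \Buchi acceptance condition on its final states turns it into an \omegaDPDA $\mathcal A_i$ that accepts an $\omega$-word $y$ exactly when infinitely many prefixes of $y$ belong to $L_i$. Since each $w_k \in \bigcup_i L_i$, the pigeonhole principle gives a fixed index $i^{\ast}$ with $w_k \in L_{i^{\ast}}$ for infinitely many $k$; hence $\mathcal A_{i^{\ast}}$ accepts $u$. Applying Lemma~\ref{lem:pumpinglemma} yields a factorisation $u = u_0 u_1 u_2$ with $u_1 \neq \emptyword$ such that $\mathcal A_{i^{\ast}}$ still accepts $u_0 u_2$. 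Thus infinitely many prefixes of $u_0 u_2$ lie in $L_{i^{\ast}} \subseteq v \backslash \Palindromes$, which means that $v u_0 u_2$ has infinitely many palindromic prefixes.

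The contradiction comes from a rigidity property of $u$ that I must establish: \emph{for every factorisation $u = u_0 u_1 u_2$ with $u_1 \neq \emptyword$, the word $v u_0 u_2$ has only finitely many palindromic prefixes.} This is the main obstacle and the only genuinely combinatorial step. The intuition is that a palindromic prefix of $v u_0 u_2$ much longer than $|v u_0 u_1|$ would force a shifted copy of the self-similar word $u$ to align palindromically with $u$ itself; because the palindromic prefixes $P_k$ are exponentially sparse and $(x_k)$ is aperiodic, the freshly inserted letters $x_k$ cannot be matched across a nonzero deletion offset beyond a bounded length, so long palindromic prefixes cannot survive any deletion. Making this precise---ruling out, for every deletion length and offset, the survival of arbitrarily long palindromic prefixes---is where the work lies; once it is in place, the accepted word $u_0 u_2$ produced above contradicts it, and the assumed equality $\bigcup_{i=1}^n L_i = v \backslash \Palindromes$ is therefore impossible.
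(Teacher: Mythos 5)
Your automaton-theoretic skeleton is exactly the paper's: sample the assumed equality with an infinite, prefix-ordered family of palindromic extensions of $v$, pigeonhole infinitely many of them into a single \DCFL, read its \DPDA as a \Buchi \omegaDPDA which then accepts the limit word, and delete a non-empty factor via Lemma~\ref{lem:pumpinglemma}. Up to that point the proposal is sound. The genuine gap is the step you yourself flag as ``where the work lies'': the rigidity claim that for \emph{every} factorisation $u = u_0 u_1 u_2$ with $u_1 \neq \emptyword$ the word $v u_0 u_2$ has only finitely many palindromic prefixes. This is not a technicality to be filled in later; it is the entire combinatorial content of the theorem, and the proposal offers only an intuition (``sparsity plus aperiodicity'') in its place. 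That intuition is unreliable as stated: aperiodicity of an infinite word does not by itself prevent infinitely many palindromic prefixes (characteristic Sturmian words are aperiodic and have exactly this property), and the truth of your claim is delicate in the seed sequence --- if the $x_k$ were all equal, your word degenerates to $(v v^R 1)^\omega$, for which the claim is false outright. So any proof must exploit the self-similar structure of $u$ together with the aperiodicity of $(x_k)$ in a genuinely quantitative way, and nothing in the proposal does this.

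The paper sidesteps this difficulty by engineering the witness word so that the corresponding step becomes a short argument. Its palindromic extensions are built by $w_{i+1} := w_i \, 1 0^{i} 1 \, w_i^R v^R$, so the limit word contains marker blocks $1 0^{\kappa} 1$ whose \emph{first} occurrences lie at unboundedly late positions. After pumping, the paper compares just two accepted \emph{finite} words: a sufficiently long accepted prefix $u'$ of the limit word, and $u''$, the same word with the factor $u_1$ removed (acceptance of $u''$ follows from the stair structure underlying Lemma~\ref{lem:pumpinglemma}). Since $v u'$ and $v u''$ are both palindromes and agree on a long suffix ending in $1 0^{\kappa} 1$, the palindrome condition reflects that suffix to the front of both words and forces an occurrence of $1 0^{\kappa} 1$ strictly earlier than its first occurrence --- an immediate contradiction. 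Note also that the paper needs only this statement about one factorisation and two finite words, which is strictly weaker than your ``all factorisations, all but finitely many prefixes'' formulation. To repair your proof, either actually establish your rigidity lemma for your ruler-like word (substantial combinatorics on words), or replace your witness by one carrying unbounded first-occurrence markers, after which your argument closes exactly as the paper's does.
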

\begin{proof}
  Define the sequence $(w_i)_{i \in \Nat}$ of strictly prefix-ordered words as follows.
  \begin{align*}
    w_0 &{:=} v^R
  \\
    w_{i+1} &{:=} w_i 1 0^{i} 1 w_i^R v^R & \text{($i \in \Nat$)}
  \end{align*}
  For all $i \in \Nat$ we have $w_i \in v \backslash \Palindromes$.
  For the sake of contradiction, assume that  
  \begin{align}\label{eq:dcfl neq palindromes:assumption}
    \bigcup_{i=1}^n L_i = v \backslash \Palindromes
    \text.
  \end{align}
  We sample the candidate on the left of Eq.~\ref{eq:dcfl neq palindromes:assumption}
  with the words $\{w_i\}_{i \in \Nat}$.  Since
  the union is finite, there is an infinite $I \subseteq \Nat$ and an $i \in \{1,\ldots,n\}$ such that
  the words $\{w_i\}_{i \in I}$ belong to $L_i$.
  Let $\mathcal A$ be a \DPDA for $L_i$.
  Additionally, we consider $\mathcal A$ as an \omegaDPDA where the final states are the \Buchi-states.
  Hence, as $\mathcal A$ is a deterministic device it accepts 
  \begin{align}
    w := \lim_{i \in \omega} w_i  = \lim_{i \in I} w_i \in  \Sigma^\omega
    \text.
  \end{align}
  Apply Lemma~\ref{lem:pumpinglemma} to $\mathcal A$ and $w$.
  Let $u_0$, $u_1$, $u_2$ be the obtained factors.  We
  run $\mathcal A$ on $w$ for at least $|u_0 u_1|$ steps until it
  processes some subword $1 0^\kappa 1$ for the first time.
  Note that the function which maps $i \in \Nat$ to the first occurrence of $1 0^i 1$ in $w$ is unbounded.
  Let $\ell$ be the first index in $w$ after that subword.  So
  far, $\mathcal A$ has seen the first $\ell$ letters in $w$.  We keep
  $\mathcal A$ running for at least another $\ell+|v|$ steps until it
  reaches a final state.  Such a run is always possible as $\mathcal
  A$ accepts infinitely many prefixes of $w$.
  Let $u'$ be the word constructed in this way.
  Hence, $u' \in v \backslash \Palindromes$ as $\mathcal A$ accepts $u'$.

  Let $u''$ be the word $u'$ where the $u_1$-block is removed.  That is
  $%
    u'' := u'\substr{0}{|u_0|-1} \; u'\substr{|u_0 u_1|}{|u'|-1}
  $. %
  Again by construction
  and Lemma~\ref{lem:pumpinglemma}, $\mathcal A$ accepts $u''$.
  Thus, $u'' \in v \backslash \Palindromes$.
  Let $\hat u$ be the word between $u_1$ and the block $1 0^\kappa 1$,
  that is 
  $%
    \hat u = w\substr{|u_0 u_1|}{\ell-3-\kappa}
  $. %
  As $v u'$ is a palindrome, it ends in the word $(v u_0 u_1 \hat u 1 0^\kappa 1)^R$ of length $\ell+|v|$.  
  The modification leading to $u''$ affects at most the first $\ell$
  positions only.  Hence, as $|u'| \geq 2 \ell + |v|$, $u''$ also
  ends in $(v u_0 u_1 \hat u 1 0^\kappa 1)^R$. As $v u''$ is also a
  palindrome, $u_0 \hat u 1 0^\kappa 1$ is a prefix of 
  $u_0 u_1 \hat u 1 0^\kappa 1$.  Since $u_1$ is not the empty word, this is a
  contradiction to the choice of $1 0^\kappa 1$.  
\end{proof}

%

\begin{lemma}\label{lem:orderL}
  If $L R \subseteq \Palindromes$ and $R$ is infinite then $L$ is prefix-ordered. %
\end{lemma}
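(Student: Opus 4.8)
The plan is to exploit that any word $r \in R$ which is at least as long as a given $u \in L$ is forced, by the palindrome condition on $ur$, to end with $u^R$. Since $R$ is infinite it contains words of unbounded length, so for any two words $u, v \in L$ I can select a single $r \in R$ that is longer than both; this $r$ then ends simultaneously with $u^R$ and with $v^R$, and comparing these two suffixes shows that the shorter of $u, v$ is a prefix of the longer. This yields prefix-orderedness directly.

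The crux is the following claim: \emph{if $u \in L$, $r \in R$ and $|r| \geq |u|$, then $r$ ends with $u^R$}. To see this, write $w := ur$, which lies in $\Palindromes$ by hypothesis, and abbreviate $a := |u|$ and $c := |r|$, so that $|w| = a+c$. For each index $i$ with $0 \leq i < a$ the letter $w[i]$ equals $u[i]$, while the mirror index $|w|-1-i = a+c-1-i$ satisfies $a+c-1-i \geq a$ (because $i < a \leq c$), so it falls inside the $r$-block and equals $r[c-1-i]$. The palindrome identity $w[i] = w[|w|-1-i]$ therefore reads $u[i] = r[c-1-i]$ for every $i < a$, which says precisely that the last $a$ letters of $r$, read in order, spell $u^R$. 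Hence $r$ ends with $u^R$.

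With the claim in hand the lemma follows quickly. Given $u, v \in L$, assume without loss of generality $|u| \leq |v|$ and pick, using the infinitude of $R$, some $r \in R$ with $|r| \geq |v| \geq |u|$. By the claim $r$ ends with both $u^R$ and $v^R$; since $|u^R| \leq |v^R|$, the word $u^R$ is a suffix of $v^R$, and reversing this relation shows that $u$ is a prefix of $v$. Thus any two words of $L$ are comparable in the prefix order, i.e.\ $L$ is prefix-ordered.

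The main obstacle is the index bookkeeping in the claim, namely verifying that the mirror positions of the $u$-block genuinely land inside the $r$-block. This is exactly where the hypothesis $|r| \geq |u|$ is needed, and it is made available precisely because $R$ is infinite and hence unbounded in length. Once that alignment is secured, the remaining steps are routine suffix comparisons.
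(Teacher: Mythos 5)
Your proof is correct and follows essentially the same route as the paper: pick a single $r \in R$ longer than both words (possible since $R$ is infinite), observe that the palindrome condition forces $r$ to end in the reversal of each word, and conclude that the shorter word is a prefix of the longer. The only difference is that you spell out the index bookkeeping behind the suffix claim, which the paper leaves implicit.
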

\begin{proof}
  Let $\ell_0, \ell_1 \in L$ with $|\ell_0| \leq |\ell_1|$.
  Take $r \in R$ such that $|r| \geq |\ell_1|$.
  This is possible as $R$ is infinite.
  Since $\ell_0 r$ and $\ell_1 r$ are palindromes,
  $\ell_0^R$ and $\ell_1^R$ are suffixes of $r$.
  Therefore, $\ell_0$ is a prefix of $\ell_1$.  
\end{proof}

\begin{lemma}\label{lem:boundR}
  Suppose $L R \subseteq \Palindromes$, $|L| \geq 2$ and $R$ is infinite.
  Then 
  \[
    R \subseteq \hat u^* \hat U
  \]
  for some word $\hat u \in \Sigma^\ast$ and a finite language $\hat U \subset \Sigma^\ast$.
\end{lemma}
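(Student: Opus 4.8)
The plan is to use the two comparable words guaranteed by Lemma~\ref{lem:orderL}. Since $R$ is infinite, that lemma makes $L$ prefix-ordered, and as $|L|\geq 2$ I may fix $\ell_0,\ell_1\in L$ with $\ell_0$ a proper prefix of $\ell_1$; write $\ell_1=\ell_0 s$ with $s\neq\emptyword$ and set $a:=\ell_0$. The single word $s$ will play the role of the period $\hat u$, and the goal is to show that every sufficiently long $r\in R$ lies in $s^\ast\{\,s'a^R : s'\text{ a prefix of }s\,\}$.

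First, I would read off the constraints that the two palindromes $\ell_0 r=ar$ and $\ell_1 r=asr$ impose on a given $r\in R$ with $|r|\geq|\ell_1|$. The elementary observation is that if $|x|\leq|r|$ and $xr$ is a palindrome, then $r=\pi x^R$ for some palindrome $\pi$. Applying this to $a$ and to $\ell_1=as$ yields $r=Pa^R=Qs^Ra^R$ with $P,Q$ palindromes, so $P=Qs^R$; using that $P$ is a palindrome gives $P=P^R=sQ^R=sQ$, and hence the word equation
\[
  sQ=Qs^R .
\]

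The heart of the matter is then purely combinatorial. The equation $sQ=Qs^R$ places the factor $Q$ both at position $0$ and at position $|s|$ of the word $P=sQ=Qs^R$, so whenever $|Q|\geq|s|$ the word $P$ has period $|s|$ and, since $P$ begins with $s$, it is a prefix of $s^\omega$. Thus $P=s^k s'$ for some $k$ and some proper prefix $s'$ of $s$, whence $r=Pa^R=s^ks'a^R$. As $s$ is fixed it has only finitely many prefixes, so all these $r$ lie in $s^\ast\{\,s'a^R : s'\text{ a prefix of }s\,\}$, a set of the required shape.

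Finally I would dispose of the exceptions. The remaining words are those with $|Q|<|s|$, that is $|r|<|\ell_1|+|s|$; over the finite alphabet $\Sigma$ there are only finitely many of them, and I simply add them to the tail set. This proves the claim with $\hat u:=s$ and $\hat U:=\{\,s'a^R : s'\text{ a prefix of }s\,\}\cup\{\,r\in R:|r|<|\ell_1|+|s|\,\}$. I expect the main obstacle to be the middle step: recognising that combining the two palindrome factorizations produces the single equation $sQ=Qs^R$ and that this equation forces the period-$|s|$ structure which collapses the a priori unbounded variety of shapes of $r$ into finitely many tails. Once the periodicity is isolated, the finiteness of $\hat U$ is immediate.
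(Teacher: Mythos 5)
Your proof is correct, and it takes a genuinely different route for the key combinatorial step, even though the setup coincides with the paper's: both invoke Lemma~\ref{lem:orderL} to get two prefix-comparable words, and your period $s=\ell_0\backslash\ell_1$ is exactly the paper's $\hat u = u_0\backslash u_1$. The paper then proceeds by an explicit induction on $n$, maintaining the paired invariants that $\hat u^n$ is a prefix of $w$ and $(\hat u^R)^n u_0^R$ is a suffix of $w$ whenever $n|\hat u|+|u_0|\le|w|$, alternating at each step between the palindromes $u_0w$ and $u_1w$; taking $n$ maximal yields the bounded tail. You instead apply each palindrome constraint only once, obtaining $r=Pa^R=Qs^Ra^R$ with $P,Q$ palindromes, and compress everything into the single word equation $sQ=Qs^R$: the occurrence of $Q$ in $P$ at both shift $0$ and shift $|s|$ forces $P$ to have period $|s|$, so $P$ is a prefix of $s^\omega$ and $r=s^k s' a^R$. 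This replaces the hand-rolled double induction by a standard self-overlap/periodicity argument, which is shorter and localizes the combinatorics in one equation; the paper's induction is more pedestrian but never needs the notion of period, instead carrying the two invariants and their length bookkeeping explicitly. Two minor remarks: the hypothesis $|Q|\ge|s|$ is not actually needed for your periodicity conclusion, since the overlap $P[i]=P[i+|s|]$ holds for all positions $i<|Q|$, which is the entire range where both indices are valid, so $P$ is a prefix of $s^\omega$ for any $|Q|$; and your exceptional set $\{r\in R : |r|<|\ell_1|+|s|\}$ silently also covers the words with $|r|<|\ell_1|$, for which $Q$ is undefined. Both points are harmless: all these words have bounded length, and over the finite alphabet $\Sigma$ there are only finitely many of them, so adding them to $\hat U$ is legitimate (the paper's own finiteness claim for $\hat U$ rests on the same fact).
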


\begin{proof}
  Let $u_0, u_1$ be two distinct words in $L$.  By the Lemma~\ref{lem:orderL} we may assume that $u_0$ is a proper prefix of $u_1$.
  Define $\hat u := u_0 \backslash u_1$.  Note that $u_0 \hat u = u_1$.

  \begin{claim}
  For $w \in R$ and $n \in \Nat$ we have 
  \begin{enumerate}
     \item \label{lem:boundR:claim:prefix} $\hat u^n$ is a prefix of $w$, and
     \item $(\hat u^R)^n u_0^R$ is a suffix of $w$
  \end{enumerate}
  if $n |\hat u| + |u_0| \leq |w|$.
  \end{claim}
  \begin{proofofclaim}
  By induction on $n$ for a fixed $w \in R$.
  If $n=0$, $u_0^R$ is a suffix of $w$ as $u_0 w$ is a palindrome.
  For the step case from $n$ to $n+1$ assume that 
  \begin{align}
    (n+1) |\hat u| + |u_0| \leq |w|
    \text.
    \label{eq:boundR:1}
  \end{align}
  The word $v := u_0 \hat u^{n+1} = u_1 \hat u^n$ is prefix of $u_1 w$ by IH\ref{lem:boundR:claim:prefix}.
  As $u_1 w$ is palindrome, $v^R$ is a suffix of $w$ because of~\eqref{eq:boundR:1}.
  This proves the second item.
  Since $u_0 w$ is also a palindrome, it has $v$ is a prefix.
  Hence $\hat u^{n+1}$ is a prefix of $w$---this is the first item.
  \end{proofofclaim}
  Let $w \in R$.
  For $N_w := \floor{(|w| - |u_0|)/|\hat u|}$, the claim yields
  $w = \hat u^{N_w} \hat w$ where $\hat w$ are the $r_w {:=} |w| - N_w|\hat u|$ last letters of $w$.
  Since $r_w \leq |u_0| + |\hat u|$ is bounded independently of $w$,
  there is a finite set $\hat U$ such that $R \subseteq \hat u^* \hat U$.
\end{proof}

\begin{lemma}\label{lem:orthogonal word}
  Let 
  \begin{align}
    L := \bigcup_{i \in I} u_{i,0} \; u_{i,1}^\ast \; u_{i,2}^\ast \; u_{i,3}
  \end{align}
  for $I$ finite and $u_{i,j} \in \Sigma^\ast$ for all suitable indices.
  Then there is a word $w \in \Sigma^\ast$ which is not a prefix of any word in $L$.
\end{lemma}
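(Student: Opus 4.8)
The plan is to reformulate the claim in terms of \emph{prefix closures} and then to defeat it by a growth (counting) argument. Write $P := \{ w \in \Sigma^\ast \mid w \text{ is a prefix of some word in } L \}$; the assertion to prove is exactly $P \neq \Sigma^\ast$. Since $\{0,1\} \subseteq \Sigma$, the full space is exponentially large, $|\Sigma^n| = |\Sigma|^n \geq 2^n$. Hence it suffices to show that $P$ is \emph{sparse}, namely that the number of its words of length $n$ grows only polynomially (indeed linearly) in $n$; for every sufficiently large $n$ this forces some word of length $n$ to lie outside $P$, and any such word is the desired $w$.

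First I would isolate the one structural fact that drives the count: for a fixed word $u \in \Sigma^\ast$ and a fixed length $m$, the set $\{x \mid x \text{ is a prefix of } u^a \text{ for some } a \geq 0\}$ contains \emph{at most one} word of length $m$, namely the length-$m$ prefix of the periodic infinite word $u^\omega$ (when $u = \emptyword$ this set is just $\{\emptyword\}$, so the statement holds trivially). Consequently, a length-$n$ prefix of a single summand $u_{i,0}\,u_{i,1}^\ast\,u_{i,2}^\ast\,u_{i,3}$ is completely determined once we fix how it splits across the four blocks, i.e.\ the two exponents $a,b$ at which it leaves the $u_{i,1}$- and $u_{i,2}$-blocks. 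The only genuine freedom therefore lies in these block boundaries. A routine case distinction---the prefix ends inside $u_{i,0}$; inside the $u_{i,1}$-block; inside the $u_{i,2}$-block for one of the $O(n)$ admissible values of $a$; or inside $u_{i,3}$, where landing within a block of fixed length $|u_{i,3}|$ pins down $b$ up to $O(1)$ choices for each of the $O(n)$ values of $a$---shows that this summand contributes at most $O(n)$ words of length $n$. Summing over the finite index set $I$ keeps the bound $O(n)$, so $|P \cap \Sigma^n| = O(n)$.

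Combining the two estimates, $|P \cap \Sigma^n| = O(n) < 2^n \leq |\Sigma^n|$ for all large $n$, whence $\Sigma^n \not\subseteq P$ and any $w \in \Sigma^n \setminus P$ proves the lemma. I expect the main obstacle to be bookkeeping rather than conceptual: enumerating the split-points without double counting, and correctly handling the degenerate cases in which some $u_{i,1}$ or $u_{i,2}$ equals $\emptyword$ (where the corresponding starred block collapses and can only \emph{lower} the count). The conceptual crux that makes the whole argument go through is the uniqueness-of-periodic-prefix observation above: it converts each starred block from a potentially exponential source of prefixes into a single deterministic continuation, leaving only the $O(n)$ boundary choices to account for.
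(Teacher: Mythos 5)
Your proof is correct, but it reaches the conclusion by a different route than the paper. The paper's proof works inside the tree $\Sigma^\ast$ and escapes paths iteratively: each $u_{i,0}\,u_{i,1}^\omega$ is a single (finite or infinite) path, so finitely many of them miss some node $w_0$; then only $|w_0|\cdot|I|$ paths of the form $u_{i,0}\,u_{i,1}^j\,u_{i,2}^\omega$ can pass through $w_0$ (passing through $w_0$ bounds the exponent $j$), so $w_0$ extends to a $w_1$ avoiding these; a third application, now to the finitely many relevant words $u_{i,0}\,u_{i,1}^j\,u_{i,2}^k\,u_{i,3}$, yields $w$. You instead make one global count: the prefix closure $P$ of $L$ satisfies $|P\cap\Sigma^n|=O(n)$, while $|\Sigma^n|\geq 2^n$, so almost every long word is a witness. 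Both arguments pivot on the identical key fact --- prefixes of $u^a$ ($a\geq 0$) form a chain, namely the prefixes of $u^\omega$, so each starred block is a single deterministic continuation and all freedom sits in the $O(n)$ choices of block boundary --- and both silently use the standing assumption $\{0,1\}\subseteq\Sigma$ (your $2^n$ lower bound; the paper's branching of the tree, without which the lemma is false, e.g.\ $L=0^\ast$ over $\Sigma=\{0\}$). What the trade-off buys: the paper's staged escape needs no enumeration of split points and mirrors the three-block shape of the summands, with each stage being a one-line finiteness observation; your count requires the bookkeeping you acknowledge (the $O(n)$ choices of $a$, the $O(1)$ window of admissible $b$ in the $u_{i,3}$ case, the degenerate $u_{i,1}=\emptyword$ or $u_{i,2}=\emptyword$ cases), but in exchange it proves something quantitatively stronger: witnesses exist at every sufficiently large length, indeed all but an $O(n)$-sized fraction of $\Sigma^n$ works, whereas the paper's construction produces a single $w$.
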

\begin{proof}
  Consider the tree $\Sigma^\omega$. %
  For each $i \in I$, the word $u_{i,0} \; u_{i,1}^\omega $ defines a (finite or infinite) path in the tree.
  As $I$ is finite, there is a $w_0 \in \Sigma^\ast$ which is not on these paths. 
  There are at most $|w_0| \cdot |I|$ paths of the form 
  $u_{i,0} \; u_{i,1}^j \; u_{i,2}^\omega$ for $i \in I$ and $j \in \Nat$
  which pass $w_0$.  By the same argument, we get a word $w_1$ which
  extends $w_0$ and cannot be reached by these paths.  
  A final application to $w_1$ and $u_{i,0} \;  u_{i,1}^j \; u_{i,2}^k \; u_{i,3}$ 
  for $i \in I$ and $j,k \in \Nat$ yields the claimed word $w$.
\end{proof}

\begin{corollary}\label{cor:Palindromes not good}
  The set $\Palindromes$ is not good.
\end{corollary}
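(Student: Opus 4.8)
The plan is to argue by contradiction. Assume $\Palindromes$ is good, so by Definition~\ref{def:good} we may fix a decomposition $\Palindromes = \bigcup_{i \in I} L_i R_i$ with $I$ finite, each $L_i$ a \DCFL satisfying $|L_i| \geq 2$, and $R_i \subseteq \Sigma^\ast$; in particular $L_i R_i \subseteq \Palindromes$ for every $i$. First I would split $I = I_{\mathrm{fin}} \uplus I_{\mathrm{inf}}$ according to whether $R_i$ is finite or infinite, and write $\Palindromes = D \cup E$ with $D := \bigcup_{i \in I_{\mathrm{fin}}} L_i R_i$ and $E := \bigcup_{i \in I_{\mathrm{inf}}} L_i R_i$. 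The whole strategy is to find one word $w \in \Sigma^\ast$ whose left quotient kills $E$ entirely, i.e.\ $w \backslash E = \emptyset$, while leaving $w \backslash D$ a finite union of \DCFL{}s; since $w \backslash \Palindromes = (w \backslash D) \cup (w \backslash E)$, Theorem~\ref{thm:dcfl neq palindromes} (with $v := w$) then gives the contradiction.

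To control $E$, fix $i \in I_{\mathrm{inf}}$. Because $R_i$ is infinite and $|L_i| \geq 2$, Lemma~\ref{lem:orderL} tells us that $L_i$ is prefix-ordered, so it has a shortest element $u_0$, and Lemma~\ref{lem:boundR} yields $R_i \subseteq \hat u_i^\ast \hat U_i$ for some word $\hat u_i$ and some finite $\hat U_i$. The essential additional step is to show that the chain $L_i$ collapses to a single arithmetic progression $L_i \subseteq u_0 z_i^\ast$. Reusing the prefix computation from the proof of Lemma~\ref{lem:boundR} for the pair $(u_0, \ell)$, a long power of the increment $u_0 \backslash \ell$ is a prefix of every sufficiently long $w \in R_i$, for each $\ell \in L_i$. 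As $R_i$ contains arbitrarily long words, for every $\ell$ the increments $u_0 \backslash \ell$ and $u_0 \backslash u_1$ (for the second-shortest element $u_1$) have arbitrarily long common powers as a common prefix; by the theorem of Fine and Wilf they are all powers of the primitive root $z_i$ of $u_0 \backslash u_1$. Hence $L_i \subseteq u_0 z_i^\ast$ and $\hat u_i \in z_i^\ast$, which gives $L_i R_i \subseteq \bigcup_{u \in \hat U_i} u_0 \, z_i^\ast \, z_i^\ast \, u$.

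Taking the finite union over $i \in I_{\mathrm{inf}}$ and $u \in \hat U_i$ exhibits $E$ as a subset of a language of exactly the form $\bigcup_j u_{j,0} u_{j,1}^\ast u_{j,2}^\ast u_{j,3}$ handled by Lemma~\ref{lem:orthogonal word}. That lemma then supplies a word $w$ that is a prefix of no word in this covering language, hence of no word in $E$; therefore $w \backslash E = \emptyset$. For the other half, each $R_i$ with $i \in I_{\mathrm{fin}}$ is finite, so $D = \bigcup_{i \in I_{\mathrm{fin}}} \bigcup_{r \in R_i} L_i \{r\}$ and, using the standard determinism-preserving constructions for \DPDA{}s (appending a fixed word by buffering, and taking the left quotient by the fixed word $w$ by advancing the initial configuration), every $w \backslash (L_i \{r\})$ is a \DCFL. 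Consequently
\[
  w \backslash \Palindromes \;=\; w \backslash D \;=\; \bigcup_{i \in I_{\mathrm{fin}}} \bigcup_{r \in R_i} w \backslash (L_i \{r\})
\]
is a finite union of \DCFL{}s, contradicting Theorem~\ref{thm:dcfl neq palindromes} and finishing the proof.

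The routine parts are the closure properties of \DCFL{}s and the bookkeeping that separates the finite-$R_i$ from the infinite-$R_i$ summands. The main obstacle is the infinite-$R_i$ case: converting the soft information that $L_i$ is a chain and that $R_i \subseteq \hat u_i^\ast \hat U_i$ into the rigid shape $u_0 z_i^\ast z_i^\ast u$ that Lemma~\ref{lem:orthogonal word} can attack. This is where the combinatorial collapse of the chain via a Fine--Wilf commutation argument seems unavoidable, since an infinite chain could otherwise be far too complex to be covered by finitely many expressions $u_{j,0} u_{j,1}^\ast u_{j,2}^\ast u_{j,3}$; and it is exactly this collapse that lets the single quotienting word $w$ annihilate every infinite-$R_i$ summand at once while leaving the finite-$R_i$ summands deterministic.
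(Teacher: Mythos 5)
Your proof is correct, and its skeleton coincides with the paper's: split the index set according to whether $R_i$ is finite or infinite, cover every infinite-$R_i$ summand by finitely many languages of the shape $u_{j,0}\,u_{j,1}^\ast\,u_{j,2}^\ast\,u_{j,3}$, invoke Lemma~\ref{lem:orthogonal word} to obtain a word $w$ that is a prefix of no word in that cover (so those summands vanish under the quotient), and conclude that $w \backslash \Palindromes$ is a finite union of \DCFL{}s, contradicting Theorem~\ref{thm:dcfl neq palindromes}. Where you genuinely diverge is in forcing the infinite-$R_i$ summands into the star-shaped form. The paper makes a case distinction on $|L_i|$: for finite $L_i$ the form $\bigcup_{x \in L_i,\, y \in \widehat R_i} x\, r_i^\ast\, y$ is immediate, and for infinite $L_i$ it exploits the reversal symmetry of palindromes---since $R_i^R L_i^R \subseteq \Palindromes$ with $|R_i^R| \geq 2$ and $L_i^R$ infinite, a second application of Lemma~\ref{lem:boundR} to the reversed product yields $L_i \subseteq \widehat L_i\, \ell_i^\ast$. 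You instead collapse the entire prefix chain $L_i$ at once, proving $L_i \subseteq u_0 z_i^\ast$ by Fine--Wilf: for each $\ell \in L_i$ the increment $u_0 \backslash \ell$ has arbitrarily long powers prefixing the arbitrarily long words of $R_i$, so all increments share one primitive root $z_i$. This is valid---the inductive claim inside the proof of Lemma~\ref{lem:boundR} applies verbatim to each pair $(u_0,\ell)$, though note that you are reusing that \emph{internal} claim rather than the lemma's statement, which only asserts $R_i \subseteq \hat u^\ast \hat U$ for \emph{some} $\hat u$---and it buys a uniform treatment of finite and infinite $L_i$ together with a stronger structural conclusion about the chain. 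What it costs is an external combinatorial theorem that the paper never needs; in particular, your closing claim that the Fine--Wilf collapse ``seems unavoidable'' is refuted by the paper's reversal trick, which reaches the required form without any word combinatorics beyond Lemma~\ref{lem:boundR} itself. The remaining bookkeeping is interchangeable: your decomposition into pieces $w \backslash (L_i\{r\})$ and the paper's single use of closure of \DCFL{}s under concatenation with finite sets followed by closure under inverse gsm mappings (left quotient by $w$) deliver the same finite union of \DCFL{}s.
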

\begin{proof}
  For the sake of contradiction, assume the contrary, that is
  \begin{align}\label{eq:Palindromes not good:assumption}
    \Palindromes = \bigcup_{i \in I} L_i R_i
  \end{align}
  where $I$ is finite, and for any $i \in I$ the language $L_i$ is a \DCFL, and $|L_i| \geq 2$.  
  Set $I^+ {:=} \{ i \in I \mid R_i \text{ is finite} \}$, and $I^- {:=} I \setminus I^+$.
  For any $i \in I^+$, we have that $L_i R_i$ is a \DCFL~\cite[Thm.~3.3]{GinsburgGreibach:dcfl}
  as $R_i$ is finite in particular.
  \par
  Let $i \in I^-$.  Since $|L_i| \geq 2$ and $R_i$ is infinite, Lemma~\ref{lem:boundR}
  shows that
  $%
    R_i \subseteq r_i^\ast \widehat R_i 
  $ %
  for some $r_i \in \Sigma^\ast$ and for a finite language $\widehat R_i \subset \Sigma^\ast$.
  Depending on the size of $L_i$ we can bound $L_i R_i$.
  If $L_i$ is infinite then the very same lemma shows by reversal that
  $%
    L_i \subseteq \widehat L_i \ell_i^\ast
  $ %
  for some $\ell_i \in \Sigma^\ast$ and a finite $\widehat L_i \subset \Sigma^\ast$.
  Hence,
  \begin{align*}
    L_i R_i 
  \subseteq
    \widehat L_i \; \ell_i^\ast \; r_i^\ast \; \widehat R_i 
  = 
    \bigcup_{\substack{x \in \widehat L_i, y \in \widehat R_i}}
      x \; \ell_i^\ast \; r_i^\ast \; y
    \text.
  \end{align*}
  In the other case---$|L_i|$ is finite---one obtains
  \begin{align*}
    L_i R_i \subseteq &
    \bigcup_{\substack{x \in L_i, y \in \widehat R_i}} x \; r_i^\ast \; y
    \text.
  \end{align*}
  In both cases, the unions are finite.
  All in all, we have
  \begin{align*}
     \bigcup_{i \in I} L_i R_i 
     \quad
     &=
     \quad
     \bigcup_{i \in I^+} \underbrace{\;L_i \; R_i \;}_{\text{\DCFL}}
     \quad
     \cup
     \quad
     Q'
  \end{align*}
  where $
     Q' \subseteq Q \; := \; \bigcup_{i \in J} u_{i,0} \; u_{i,1}^\ast \; u_{i,2}^\ast \; u_{i,3}
  $
  for some finite set $J$, and some words $u_{i,0}, \ldots, u_{i,3} \in \Sigma^\ast$.
  By Lemma~\ref{lem:orthogonal word}, there is a finite word %
  $w$ which is not a prefix of any word in $Q$.
  Using~\eqref{eq:Palindromes not good:assumption}, we get
  \begin{align*}
    w \backslash \Palindromes 
    = 
    \bigcup_{i \in I^+} w \backslash (L_i R_i)
    \text.
  \end{align*}
  The left quotient with a single word $w$ is the inverse of the gsm mapping
  which sends a word $u$ to $w\, u$.  As \DCFL{}s are closed under the inverse of gsm 
  mappings~\cite[Thm.~3.2]{GinsburgGreibach:dcfl}, the language $w \backslash (L_i R_i)$
  is a \DCFL{} for $i\in I^+$.
  But this a contradiction to Theorem~\ref{thm:dcfl neq palindromes}.
\end{proof}

\section{A Measure for the Extraction}
\label{sec:measure}

Informally, the measure of a formula is a set of vectors.
Each vector measures the languages annotated to $\EF$-subformulas
along a path from the root of the formula to its atoms.
For the measure of a language, the size of its only word is considered
if the language is a singleton.

\begin{definition}\label{def:measure}
  Let $\Measure$ be the set of all finite subsets of $(\omega+1)^\ast$
  where $\omega+1 = \{0,1,2,\ldots,\omega\}$.
  The second argument of the cons-operator $\cdot :: \cdot$ on $(\omega+1)^\ast$ is
  extended to sets. %
  The empty list is written as $nil$. %
  The \emph{measure} of a formula is defined by
  \begin{align*}
    \measure(\ell) &{:=} \{ nil \}
    & \text{ for $\ell$ a literal or a constant}
  \\
    \measure(\varphi_0 \circ \varphi_1) &{:=} \measure(\varphi_0) \cup \measure(\varphi_1) 
    &\text{ for }\circ\in\{\wedge,\vee\}
  \\  
    \measure(Q^L \varphi) &{:=} ||L|| :: \measure(\varphi)
    & \text{ for }Q\in\{\EF,\AG\}
  \end{align*}
  where
  \[
    ||L|| = \begin{cases}
      |w| & \text{if $L=\{w\}$ for some $w \in \Sigma^\ast$,}
    \\
      \omega & \text{otherwise.}
    \end{cases}
  \]
\end{definition}

\begin{lemma}
  The lexicographic order~\cite[Sect.~2.4]{TermRewriting}, $>_{lex}$, on $(\omega+1)^\ast$
  is defined by
  \[
    (\omega+1)^n \ni (u_1, \ldots, u_n) >_{lex} (v_1, \ldots, v_m) \in (\omega+1)^m
  \]
  iff
  $n>m \vee (n=m \wedge \exists k<n. u_k = v_k \wedge \forall i< k. u_i > v_i)$,
  where $>$ is the natural order on $\omega+1$, that is $\omega > \ldots > 1 > 0$.
\end{lemma}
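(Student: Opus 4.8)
The plan is to read this Lemma as the assertion that $>_{lex}$ is a \emph{well-founded strict linear order} on $(\omega+1)^\ast$, since that is precisely the property the measure needs in order to drive the termination argument for the extraction in the later sections. I would prove it by reducing to three well-understood facts: that $(\omega+1,>)$ is itself a well-order (being an ordinal), that the lexicographic product of finitely many well-founded orders over a \emph{fixed} finite index set is again well-founded (essentially the content of the cited \cite[Sect.~2.4]{TermRewriting}), and that the length of a tuple is the dominant comparison criterion in the defining disjunction.

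First I would check that $>_{lex}$ is a strict linear order. Irreflexivity is immediate: when two tuples coincide, neither $n>n$ holds nor is there a position witnessing the second disjunct, so $x \not>_{lex} x$. Transitivity and totality (trichotomy) follow by a short case distinction on lengths: if two tuples have different lengths the longer one is the larger, and if they share a common length $n$ the comparison is the pure lexicographic order on $(\omega+1)^n$, which is total because $>$ on $\omega+1$ is total. None of these cases requires more than routine bookkeeping.

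The substance of the lemma is well-foundedness, and this is the step I expect to be the main obstacle, precisely because plain lexicographic order on tuples of \emph{unbounded} length is in general \emph{not} well-founded. The observation that rescues the argument is that length dominates: if $x >_{lex} y$ then $|x| \geq |y|$, directly from the defining disjunction. Hence along any infinite strictly descending chain $x^{(0)} >_{lex} x^{(1)} >_{lex} \cdots$ the lengths $|x^{(0)}| \geq |x^{(1)}| \geq \cdots$ form a non-increasing sequence of natural numbers, which therefore stabilises at some value $n$ from some index on. Beyond that index the chain lives entirely inside $(\omega+1)^n$ and descends under the \emph{pure}, fixed-length lexicographic order.

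It then remains to rule out an infinite descending chain of fixed length $n$. Here I would invoke the standard fact that the lexicographic product $(\omega+1)^n$ of $n$ copies of the well-order $(\omega+1,>)$ is well-founded; formally this is an easy induction on $n$, the base $n=0$ being the one-element order and the step using $(\omega+1)^{n} \cong (\omega+1)\times(\omega+1)^{n-1}$ together with the fact that the lexicographic product of two well-founded orders is well-founded. Since $(\omega+1)^n$ admits no infinite descending chain, the stabilised tail produced above yields a contradiction, so $>_{lex}$ is well-founded. Combining the linear-order part with well-foundedness shows that $>_{lex}$ is a well-order on $(\omega+1)^\ast$, as required for the measure.
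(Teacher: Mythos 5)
Your proof is correct, but it is worth pointing out that the paper never actually proves this statement: the ``lemma'' functions as a definition (it merely introduces $>_{lex}$), and the properties you establish---strictness and termination---are invoked only in the subsequent lemma on $>_\Measure$, where they are discharged by citing Lemma~2.4.3 of Baader and Nipkow rather than argued directly. Your proposal supplies exactly the content that this citation hides, and the genuinely valuable point you make explicit is the length-domination step: the cited result concerns lexicographic products of \emph{fixed} arity, and the naive lexicographic order on strings of unbounded length over a well-founded set is in general not terminating. It is only because the definition compares lengths first that any infinite descending chain must have eventually constant length, after which the fixed-arity result (your induction on $n$) applies; the paper's one-line citation silently glosses over this reduction, so spelling it out is a real improvement rather than a detour. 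Two minor remarks. First, the totality part of your argument is more than is needed downstream: the multiset construction for $>_\Measure$ only requires a strict terminating order, so trichotomy is a bonus. Second, you silently (and correctly) repaired a typo in the statement: as literally written, the condition $\exists k<n.\, u_k = v_k \wedge \forall i<k.\, u_i > v_i$ has the roles of $=$ and $>$ interchanged, and under the literal reading $>_{lex}$ would not even be irreflexive (take $k=0$ for any non-empty tuple); your reading---the first differing position decides, i.e.\ $u_k > v_k$ with $u_i = v_i$ for all $i<k$---is clearly the intended one, and your irreflexivity argument is valid only for that corrected reading.
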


\begin{definition}
  The binary relation $>_\Measure$ on $\Measure$ is defined as follows.
  \par
  \begin{tabular}{rl}
    $M >_\Measure N$ iff 
  &
    there are $X, Y \in \Measure$ such that 
   $\emptyset \neq X \subseteq M$, \\&$N= (M \setminus X) \cup Y$, and
   $\forall y \in Y \exists x \in X. x >_{lex} y$.
  \end{tabular}
\end{definition}

\begin{lemma}
  The relation $>_\Measure$ is a strict and terminating order.
\end{lemma}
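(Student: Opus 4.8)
The relation $>_\Measure$ is precisely the Dershowitz--Manna multiset ordering induced by $>_{lex}$, except that $\Measure$ consists of finite \emph{sets} rather than multisets. The plan is therefore to first establish that the base order $>_{lex}$ is a strict, total, and well-founded order on $(\omega+1)^\ast$, and then to transfer the classical multiset-ordering theorem~\cite{TermRewriting} to the set setting.

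For the base order, I would argue that $>_{lex}$ is strict and total because $\omega+1$ is linearly ordered and any two finite sequences are compared first by length and, for equal lengths, lexicographically; distinct sequences are thus always comparable, while irreflexivity and transitivity are routine. For well-foundedness, consider a descending chain in $(\omega+1)^\ast$: since the ``longer is larger'' clause forces the lengths to form a non-increasing sequence of naturals, the length stabilizes after finitely many steps at some value $n$, after which every further step is a strict lexicographic decrease inside $(\omega+1)^n$. As $\omega+1$ is itself well-ordered, the finite lexicographic product $(\omega+1)^n$ is well-ordered, so no infinite descending chain can exist. Hence $>_{lex}$ is a well-order.

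It then remains to lift well-foundedness and the order axioms from $>_{lex}$ to $>_\Measure$. I would embed each finite set $M$ into the multiset $\iota(M)$ assigning multiplicity one to every element, and check that $M >_\Measure N$ implies $\iota(M) >_{mul} \iota(N)$ for the usual multiset order $>_{mul}$: given witnesses $X,Y$ with $N = (M \setminus X) \cup Y$, one takes $\iota(X)$ as the removed multiset and $\iota\bigl(Y \setminus (M \setminus X)\bigr)$ as the added multiset, so that only the genuinely new elements are added and each of them is dominated, via $>_{lex}$, by an element of $X$. Since $\iota$ is injective, well-foundedness of $>_{mul}$---which is the multiset-ordering theorem applied to the well-founded $>_{lex}$---yields termination of $>_\Measure$, and irreflexivity follows immediately, as $M >_\Measure M$ would already contradict well-foundedness. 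Transitivity I would verify directly, by the standard composition of two multiset-decrease witnesses adapted to set union and difference.

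The main obstacle is the passage from multisets to sets: the set union in $N = (M \setminus X) \cup Y$ silently collapses duplicates, so one must carefully isolate the truly new elements of $Y$ before invoking the multiset machinery, and the same bookkeeping reappears in the direct proof of transitivity. A secondary point to pin down is the precise reading of the slightly unusual definition of $>_{lex}$, ensuring that it really is the length-then-lexicographic order, so that its well-foundedness argument goes through as above.
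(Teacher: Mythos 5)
Your proposal follows essentially the same route as the paper's proof: both establish that the length-first lexicographic order on $(\omega+1)^\ast$ is strict and terminating, and then transfer these properties to $>_\Measure$ via the embedding of finite sets into finite multisets, invoking the multiset-ordering results of Baader and Nipkow. Your extra bookkeeping---adding only $Y \setminus (M \setminus X)$ so that the set union's collapsing of duplicates does not break the multiset decrease---is precisely the detail the paper leaves implicit in the phrase ``dominated by the multiset order,'' so the two arguments coincide.
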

\begin{proof}
  We follow Baader and Nipkow~\cite{TermRewriting}.
  The natural order on $\omega+1$ is strict and terminating.
  Hence, so is $>_{lex}$~\cite[Lemma~2.4.3]{TermRewriting}.
  Therefore, the multiset order on $(\omega+1)^\ast$ is also strict and terminating~\cite[Lemma~2.5.4 and Theorem~2.5.5]{TermRewriting}.
  Due to the natural embedding of $\Measure$ into the set of finite mulisets on $(\omega+1)^\ast$,
  the relation $>_\Measure$ is dominated by the multiset order.  Hence $>_\Measure$ is terminating.
  Thanks to the same embedding, $>_\Measure$ is a strict order
\end{proof}

  We write $\geq_\Measure$ for the reflexive closure of $>_\Measure$.
  Similarly, $\leq_\Measure$ and $<_\Measure$ are meant.
%

\section{\texorpdfstring{$\emptyword$-Free}{Emptyword-Free} Formulas}
\label{sec:emptyword-free}

Formulas like $\EF^L \psi$ and $\AG^L \psi$ can speak about the current state if $\emptyword \in L$.
We intend to combine structures at their roots---in the proof to 
Thm.~\ref{thm:elimA} and~\ref{thm:wedgeEF elim}---, such that formulas should not realize this modification.
Nonetheless, formulas can be transformed accordingly.

\begin{definition}
  The property being \emph{$\emptyword$-free} is inductively defined on \PDLplain[$\cdot$]-formulas.
  \begin{enumerate} 
    \item Any literal is $\emptyword$-free.
    \item A conjunction and a disjunction is $\emptyword$-free if both con\-juncts or both dis\-juncts, respectively, are $\emptyword$-free.
    \item $\EF^L \varphi$ and $\AG^L \varphi$ are $\emptyword$-free iff $\emptyword \notin L$ and $\varphi$ is $\emptyword$-free.
  \end{enumerate} 
\end{definition}

\begin{definition}\label{def:ElimEW}
  The function $\ElimEW{\cdot}$ is defined on \PDLplain[$\cdot$]-formulas
   \begin{align*}
     \ElimEW{\ell} 
     &:= 
     \ell 
     &\text{ where $\ell$ literal or a constant} %
   \\
     \ElimEW{\varphi_0 \circ \varphi_0} 
     &:=
     \ElimEW{\varphi_0} \circ \ElimEW{\varphi_0} 
     &\text{ for }\circ\in\{\wedge,\vee\}
   \\
     \ElimEW{Q^{L} \varphi} 
     &:=
     \begin{cases}
       Q^{L} \ElimEW{\varphi}
       &\text{if }\emptyword \notin L
     \\
       \ElimEW{\varphi} \vee Q^{L\setminus\{\emptyword\}} \ElimEW{\varphi}
       &\text{otherwise if $Q=\EF$}
     \\
       \ElimEW{\varphi} \wedge Q^{L\setminus\{\emptyword\}} \ElimEW{\varphi}
       &\text{otherwise if $Q=\AG$}
     \end{cases}
     & \text{ for }Q\in\{\EF,\AG\}
   \end{align*}
\end{definition}

\begin{lemma}\label{lem:ElimEW}
  For every \PDLplain[$\cdot$]-formula $\varphi$ we have,
  \begin{enumerate}
    \item $\varphi$ and $\ElimEW{\varphi}$ are equivalent,
    \item $\ElimEW{\varphi}$ is $\emptyword$-free, and
    \item $\measure(\ElimEW{\varphi})  \leq_\Measure  \measure(\varphi)$.
  \end{enumerate}
\end{lemma}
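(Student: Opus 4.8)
The plan is to prove all three assertions simultaneously by structural induction on $\varphi$, since the function $\ElimEW{\cdot}$ is defined by recursion on the formula structure and each clause of Definition~\ref{def:ElimEW} matches a clause of the inductive definition of the measure (Definition~\ref{def:measure}) and of $\emptyword$-freeness. The base case is a literal or constant $\ell$, where $\ElimEW{\ell}=\ell$ by definition; here all three claims are immediate, since $\ell$ is equivalent to itself, any literal is $\emptyword$-free by clause~(i) of the definition, and $\measure(\ell)=\{nil\}\leq_\Measure\{nil\}=\measure(\ell)$ by reflexivity. For the inductive step I would treat the Boolean connectives and the modal quantifiers separately.

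For a connective $\varphi_0\circ\varphi_1$ with $\circ\in\{\wedge,\vee\}$, equivalence follows because equivalence is a congruence for $\wedge$ and $\vee$ and the induction hypothesis gives $\varphi_j\leftrightarrow\ElimEW{\varphi_j}$; $\emptyword$-freeness follows from clause~(ii) applied to the two $\emptyword$-free disjuncts/conjuncts supplied by the induction hypothesis; and for the measure I would use $\measure(\ElimEW{\varphi_0\circ\varphi_1})=\measure(\ElimEW{\varphi_0})\cup\measure(\ElimEW{\varphi_1})\leq_\Measure\measure(\varphi_0)\cup\measure(\varphi_1)$, which requires knowing that $\leq_\Measure$ is compatible with unions — that is, $M_0\leq_\Measure N_0$ and $M_1\leq_\Measure N_1$ imply $M_0\cup M_1\leq_\Measure N_0\cup N_1$. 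This monotonicity is the one auxiliary fact I would either cite or quickly justify from the definition of $>_\Measure$ by taking the witness sets $X,Y$ componentwise.

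The interesting case is $Q^L\varphi$ for $Q\in\{\EF,\AG\}$. When $\emptyword\notin L$ the map acts as a congruence, $\ElimEW{Q^L\varphi}=Q^L\ElimEW{\varphi}$, and all three claims reduce to the induction hypothesis together with $\emptyword\notin L$ for the $\emptyword$-freeness clause~(iii). When $\emptyword\in L$ and $Q=\EF$, I would verify the equivalence $\EF^L\varphi\leftrightarrow\varphi\vee\EF^{L\setminus\{\emptyword\}}\varphi$: a witnessing path for $\EF^L\varphi$ either uses the empty word (whence $\varphi$ holds at the root) or a nonempty word in $L\setminus\{\emptyword\}$; combined with the induction hypothesis $\varphi\leftrightarrow\ElimEW{\varphi}$ this gives equivalence to $\ElimEW{\varphi}\vee\EF^{L\setminus\{\emptyword\}}\ElimEW{\varphi}$. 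The $\AG$ case is dual, using $\AG^L\varphi\leftrightarrow\varphi\wedge\AG^{L\setminus\{\emptyword\}}\varphi$ since the quantification over all paths splits into the empty path and the nonempty ones. For $\emptyword$-freeness in these two subcases, note $\emptyword\notin L\setminus\{\emptyword\}$, so $Q^{L\setminus\{\emptyword\}}\ElimEW{\varphi}$ is $\emptyword$-free by clause~(iii), and the whole formula is $\emptyword$-free by clause~(ii).

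The step I expect to demand the most care is the measure bound in this last case. We have $\measure(Q^L\varphi)=\lVert L\rVert::\measure(\varphi)$, while $\measure(\ElimEW{\varphi}\,\circ\,Q^{L\setminus\{\emptyword\}}\ElimEW{\varphi})=\measure(\ElimEW{\varphi})\cup\bigl(\lVert L\setminus\{\emptyword\}\rVert::\measure(\ElimEW{\varphi})\bigr)$. Each vector on the left has the form $\lVert L\rVert::\vec{u}$ with $\vec{u}\in\measure(\varphi)$; I must dominate every new vector by one of these. The vectors $\lVert L\setminus\{\emptyword\}\rVert::\vec{w}$ with $\vec{w}\in\measure(\ElimEW{\varphi})\leq_\Measure\measure(\varphi)$ are dominated because $\lVert L\setminus\{\emptyword\}\rVert\geq\lVert L\rVert$ — removing $\emptyword$ cannot decrease the length of a singleton's word and leaves $\omega$ unchanged — and by the induction hypothesis each such $\vec{w}$ is $\leq_{lex}$ some $\vec{u}\in\measure(\varphi)$, so $\lVert L\setminus\{\emptyword\}\rVert::\vec{w}\leq_{lex}\lVert L\rVert::\vec{u}$ by the definition of $>_{lex}$ on equal-length prefixes. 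The bare vectors $\vec{w}\in\measure(\ElimEW{\varphi})$ are strictly shorter than $\lVert L\rVert::\vec{u}$, hence dominated by the length clause of $>_{lex}$. The bookkeeping obstacle is to package these two families into a single witness pair $(X,Y)$ realizing $\measure(\ElimEW{Q^L\varphi})\leq_\Measure\measure(Q^L\varphi)$; I would take $X=\measure(Q^L\varphi)$ itself and let $Y$ be the full measure of the rewritten formula, checking the domination condition $\forall y\in Y\,\exists x\in X.\,x>_{lex}y$ (or equality, absorbed into $\leq_\Measure$) via the case analysis just sketched.
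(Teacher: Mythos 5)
Your induction scheme is the same as the paper's: prove all three items by simultaneous structural induction, with everything routine except the measure bound for $\ElimEW{Q^{L}\varphi}$ in the case $\emptyword\in L$ --- which is precisely the one case the paper details. But your argument for that case contains a genuine error. The inequality you invoke, $||L\setminus\{\emptyword\}||\geq||L||$, is false in general: for $L=\{\emptyword,w\}$ with $w\neq\emptyword$ we have $||L||=\omega$ (not a singleton) while $||L\setminus\{\emptyword\}||=|w|<\omega$, so removing $\emptyword$ \emph{can} decrease the measure, namely when it turns a doubleton into a singleton. Worse, even where the inequality does hold it points the wrong way for the conclusion you draw: in the length-first lexicographic order a larger head makes the whole list larger, so from $||L\setminus\{\emptyword\}||>||L||$ and $|\vec w|=|\vec u|$ one gets $||L\setminus\{\emptyword\}||::\vec w>_{lex}||L||::\vec u$, not $\leq_{lex}$. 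What your vector-by-vector domination actually needs is the \emph{reverse} inequality $||L\setminus\{\emptyword\}||\leq||L||$, and that is what holds, for the right reason (and is what the paper's chain $\measure(Q^{L\setminus\{\emptyword\}}\varphi)\leq_\Measure\measure(Q^{L}\varphi)$ implicitly uses): this case only arises when $\emptyword\in L$, so either $L=\{\emptyword\}$ or $|L|\geq2$, and in the latter case $||L||=\omega$, whence $||L\setminus\{\emptyword\}||\leq\omega=||L||$ trivially. With that corrected fact, your case split (short vectors dominated via the length clause, prefixed vectors via head-then-tail comparison using the induction hypothesis) does go through.

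Two further points, which your proposal shares with the paper rather than introduces, but which you should still flag. First, the residual case $L=\{\emptyword\}$ genuinely breaks the bound: there $||L\setminus\{\emptyword\}||=||\emptyset||=\omega>0=||L||$, so the new vector $(\omega)::\vec w$ dominates the old $(0)::\vec u$ rather than the other way around; this case needs separate treatment, e.g.\ by setting $\ElimEW{Q^{\{\emptyword\}}\varphi}:=\ElimEW{\varphi}$ or by first simplifying $\EF^{\emptyset}\psi$ to $\False$ and $\AG^{\emptyset}\psi$ to $\True$. Second, your final packaging with $X=\measure(Q^{L}\varphi)$ and $Y$ the entire new measure does not meet the letter of the definition of $>_\Measure$, which requires every $y\in Y$ to be \emph{strictly} dominated by some $x\in X$: a vector lying in both measures (e.g.\ $(\omega)::\vec u$ when $||L||=||L\setminus\{\emptyword\}||=\omega$ and $\vec u\in\measure(\varphi)\cap\measure(\ElimEW{\varphi})$) admits no such strict bound, and indeed $M>_\Measure N$ is impossible whenever $M\subseteq N$, since a $>_{lex}$-maximal element of $X$ can never reappear in $N$. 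The relation your argument (and the paper's final step) really establishes is that every vector of the new measure is $\leq_{lex}$ some vector of the old one; it is this domination preorder, not the literal $\leq_\Measure$, that also validates the union-monotonicity you rely on in the Boolean cases. So: same route as the paper, but the key inequality is stated backwards and justified by a false claim, and the two boundary issues above need explicit handling.
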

\begin{proof} %
  Each item can be proven by induction on $\varphi$.  We detail on the last item
  for the second case of $\ElimEW{Q^{L} \varphi}$.
  As the IH yields $\measure(\ElimEW{\varphi}) \leq_\Measure \measure(\varphi)$, we have  %
  $%
    \measure(\ElimEW{\varphi}) 
    <_\Measure
    \measure(Q^{L \setminus\{\emptyword\}}\ElimEW{\varphi})
    \leq_\Measure
    \measure(Q^{L \setminus\{\emptyword\}}\varphi)
    \leq_\Measure
    \measure(Q^{L}\varphi)
  $. %
  Hence, the claim follows by 
  $
    \measure(\ElimEW{\varphi} \vee Q^{L\setminus\{\emptyword\}} \ElimEW{\varphi})
    \leq_\Measure
    \measure(Q^{L\setminus\{\emptyword\}} \ElimEW{\varphi})
  $.
\end{proof}

%
%
\section{Elimination of Outermost \texorpdfstring{$\AG$-Formulas}{AG-Formulas} and Conjunctions}
\label{sec:elim}

Although it is impossible to eliminate conjuncts and $\AG$-formulas in general,
the topmost ones can be removed (Thm.~\ref{thm:elimA} and~\ref{thm:wedgeEF elim}).
Hence, if $\vartheta$ is equivalent to $\EF^L \True$ for some language $L$ then $\vartheta$ can be rearranged
to a disjunction of $\EF$-formulas only.  However, these $\EF$-formulas might contain
conjunctions and $\AG$-formulas in turn.

\begin{definition}\label{def:DNF and completion}
  A formula $\vartheta$ is in \emph{disjunctive normal form (\DNF for short)} iff it has
  the shape
  \begin{align*}%
    \bigvee_{i \in I} \left(\bigwedge_{j \in J^\A_i} \alpha_{i,j} \wedge \bigwedge_{j \in J^\E_i} \varepsilon_{i,j} \right)
  \end{align*}
  where $I$, $J^\A_i$ and $J^\E_i$ are finite sets, $\alpha_{i,j}$ is an $\emptyword$-free
  $\AG$-formula, and $\varepsilon_{i,j}$ is an $\emptyword$-free $\EF$-formula (for all suitable indices).
  The \emph{completion of $\vartheta$} is
  \begin{align*}
    \Complete{\vartheta}
    &:=
    \vartheta \quad \vee \quad \bigvee_{\substack{\Psi' \subseteq \Psi \\ \models \bigwedge\Psi' \impl \vartheta}} \bigwedge\Psi'
  \end{align*}
  where $\Psi {:=} \{\varepsilon_{i,j} \mid i \in I, j \in J^\E_i\}$.
  A formula $\vartheta'$ is \emph{complete} iff it is $\Complete{\vartheta}$ for some $\vartheta$.
  The term ``\DNF'' and ``complete'' shall be applied up to associativity and commutativity of the 
  Boolean connectives\footnote{Note that this is well-defined when the measure $\measure$ is taken.}.
\end{definition}

\begin{lemma}\label{lem:DNF and completion}
  For any $\emptyword$-free formula $\vartheta$ we have
  \begin{enumerate}
    \item an equivalent formula $\vartheta'$ in \DNF 
       such that $\measure(\vartheta') \leq_\Measure \measure(\vartheta)$, and
    \item that $\Complete{\vartheta}$ is a \DNF, $\Complete{\vartheta}$ and $\vartheta$ are equivalent, and 
      $\measure(\Complete{\vartheta}) \leq_\Measure \measure(\vartheta)$.
  \end{enumerate}
\end{lemma}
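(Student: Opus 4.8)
The plan is to handle both items by working only on the outermost Boolean skeleton of $\vartheta$, never descending into an $\EF$- or $\AG$-formula. This is legitimate because the \DNF{} shape admits an arbitrary $\emptyword$-free $\EF$-formula as a conjunct $\varepsilon_{i,j}$ and an arbitrary $\emptyword$-free $\AG$-formula as a conjunct $\alpha_{i,j}$; hence I may treat every constant and every $\EF$- or $\AG$-subformula occurring as a leaf of the skeleton as an opaque \emph{atom}. Two elementary facts drive everything. First, $\measure$ sends each Boolean connective to set union, so $\measure(\vartheta) = \bigcup_a \measure(a)$ with $a$ ranging over these atoms; consequently $\measure$ is invariant under associativity, commutativity, idempotency and distributivity of $\wedge,\vee$. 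Second, $M \subseteq N$ implies $M \leq_\Measure N$: if $M \subsetneq N$, then choosing $X := N \setminus M \neq \emptyset$ and $Y := \emptyset$ in the definition of $>_\Measure$ witnesses $M <_\Measure N$, the domination condition being vacuous.

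For (i) I first rewrite $\vartheta$ into propositional disjunctive normal form by distributing $\wedge$ over $\vee$ with the atoms kept opaque. Since distribution leaves the set of occurring atoms unchanged (atoms get duplicated, but set union is idempotent), this step preserves $\measure$ exactly. I then eliminate constants: a $\True$ is dropped from each conjunction, a disjunct containing $\False$ is dropped altogether, an empty conjunction is read as $\True$ and an empty disjunction as $\False$. Each such simplification merely deletes atoms, so it can only shrink the union $\bigcup_a \measure(a)$ to a subset; by the inclusion fact this gives $\measure(\vartheta') \leq_\Measure \measure(\vartheta)$. The outcome is a disjunction of conjunctions of $\emptyword$-free $\EF$- and $\AG$-formulas --- the $\emptyword$-freeness being inherited, as subformulas of an $\emptyword$-free formula are $\emptyword$-free --- i.e. a \DNF equivalent to $\vartheta$.

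For (ii) I may assume by (i) that $\vartheta$ is already a \DNF, so that $\Psi$ and $\Complete{\vartheta}$ are defined. That $\Complete{\vartheta}$ is again a \DNF is immediate: each added disjunct $\bigwedge \Psi'$ is a conjunction of members of $\Psi$, i.e. of $\emptyword$-free $\EF$-formulas, hence a legal \DNF-disjunct with empty $\AG$-part, and a disjunction of \DNF{}s is a \DNF. Equivalence is also immediate: $\vartheta \models \Complete{\vartheta}$ because $\vartheta$ is one of the disjuncts, and $\Complete{\vartheta} \models \vartheta$ because a disjunct $\bigwedge \Psi'$ is admitted only under the side condition $\models \bigwedge \Psi' \impl \vartheta$. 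For the measure, every $\varepsilon \in \Psi$ is one of the $\varepsilon_{i,j}$ of $\vartheta$, whence $\measure(\varepsilon) \subseteq \measure(\vartheta)$ and $\measure(\bigwedge \Psi') = \bigcup_{\varepsilon \in \Psi'} \measure(\varepsilon) \subseteq \measure(\vartheta)$; taking the union over all admitted $\Psi'$ yields $\measure(\Complete{\vartheta}) = \measure(\vartheta) \cup \bigcup_{\Psi'} \measure(\bigwedge \Psi') = \measure(\vartheta)$, which is $\leq_\Measure \measure(\vartheta)$.

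The step I expect to need the most care is the constant handling in (i): checking that the degenerate cases $\vartheta \equiv \True$ and $\vartheta \equiv \False$, realised through empty conjunctions and disjunctions, are captured by the \DNF shape and remain compatible with the inclusion argument, and that constant elimination really does nothing to the measure beyond deleting atoms. Once the atom viewpoint and the two elementary facts above are in place, both items follow without any appeal to monotonicity of $\cdot :: \cdot$ or to compatibility of $>_\Measure$ with union.
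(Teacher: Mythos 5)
Your proof is correct and follows essentially the same route as the paper's: distribute $\wedge$ over $\vee$ treating $\EF$-/$\AG$-formulas as opaque atoms and use that $\measure$ is plain set union on Boolean connectives, then observe that the completion's extra disjuncts are built from top-level $\EF$-formulas of $\vartheta$, so their measure is already subsumed in $\measure(\vartheta)$. Your explicit handling of the constants $\True$/$\False$ and the inclusion fact ($M \subseteq N$ implies $M \leq_\Measure N$) fill in details the paper leaves implicit, but they do not change the argument.
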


\begin{proof}
  To get a \DNF, the distributive law is applied where $\AG$- and $\EF$-formulas are taken as atoms. 
  This application might rearranging (positive) Boolean connectives and might duplicate atoms.  However,
  the measure is defined in terms of unions for these cases.
  \par
  For the second item, the implication to $\vartheta$ follows from the definition of the additional disjuncts.
  The other direction is weakening.  As the additional terms are build only of top-level $\EF$-formulas in $\vartheta$,
  their measure is already subsumed in $\measure(\vartheta)$.  Note that $\mu$ is just the union in the
  case of the (positive) Boolean connectives.
\end{proof}

  For two structures $\mathcal M_1$ and $\mathcal M_2$ we define $\mathcal M_1 \oplus \mathcal M_2$
  as the \emph{disjoint sum} of both structures but with the root shared.
  The evaluation of the root is fixed as $\Prop = \emptyset$ for our purposes.
  The notation is extended to sequences of structures, say $(\mathcal M_i)_{i \in I}$, in the usual way, written as $\oplus_{i \in I} \mathcal M_i$.
  A formula $\psi$ is \emph{structurally monotone} iff for any model of $\psi$ any of its extension is also a model of $\psi$.
  An example is $\EF^L \True$ for any language $L$.

\begin{theorem}[Elimination of $\AG$-formulas]\label{thm:elimA}
  Let 
  \begin{align}
    \psi &:= \bigvee_{i \in I} \underbrace{\left(\alpha_i \wedge \bigwedge_{j \in J_i} \varepsilon_{i,j}\right)}_{=: \tau_i}
  \end{align}
  be complete where $I$, $J_i$ are finite, each $\alpha_i$ is a (possibly empty) conjunction of
  $\emptyword$-free
  $\AG$-formulas, and each $\varepsilon_{i,j}$ is a $\emptyword$-free $\EF$-formula.
  If $\psi$ is structurally monotone, then $\psi$ is equivalent to
  \begin{align}
    \psi' &:= \bigvee_{\substack{i \in I \\\models \alpha_i}} \underbrace{\left(\bigwedge_{j \in J_i} \varepsilon_{i,j}\right)}_{=: \tau_i'}
    \text.
  \end{align}
\end{theorem}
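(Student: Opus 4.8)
The plan is to prove both inclusions, the converse $\psi' \models \psi$ being immediate and not even requiring monotonicity: every disjunct $\tau_i'$ of $\psi'$ satisfies $\models \alpha_i$, so $\tau_i'$ is equivalent to $\alpha_i \wedge \tau_i' = \tau_i$, which is a disjunct of $\psi$; hence $\psi' \models \psi$. For the interesting direction $\psi \models \psi'$ I would, by bisimulation invariance, restrict attention to tree-like structures and take an arbitrary tree-like $\mathcal M \models \psi$, fixing an index $i$ with $\mathcal M \models \tau_i$. The goal is then to exhibit a \emph{purely-$\EF$} disjunct of $\psi$ that $\mathcal M$ already satisfies, for such a disjunct has empty $\AG$-part and is therefore retained in $\psi'$.

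The heart of the argument is the following claim. Set $\Psi'' := \{\varepsilon \in \Psi \mid \mathcal M \models \varepsilon\}$, the set of all $\EF$-conjuncts occurring in $\psi$ that hold at $\mathcal M$; then trivially $\mathcal M \models \bigwedge \Psi''$, and I claim that $\models \bigwedge \Psi'' \to \psi$. To prove it, take any tree-like $\mathcal N \models \bigwedge \Psi''$ and form $\mathcal M \oplus \mathcal N$. Since $\mathcal M \leq \mathcal M \oplus \mathcal N$ and $\psi$ is structurally monotone, $\mathcal M \oplus \mathcal N \models \psi$, so $\mathcal M \oplus \mathcal N \models \tau_k$ for some $k$. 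Here I would invoke the distributive behaviour of $\emptyword$-free modalities over $\oplus$ on tree-like structures---every nonempty path out of the shared root remains in a single summand---namely $\mathcal M_1 \oplus \mathcal M_2 \models \EF^L \varphi$ iff some summand does, while $\mathcal M_1 \oplus \mathcal M_2 \models \AG^L \varphi$ iff both do. From the $\AG$-part of $\tau_k$ this yields $\mathcal N \models \alpha_k$; from the $\EF$-part, each $\varepsilon_{k,j}$ holds in $\mathcal M$ or in $\mathcal N$, and whenever it holds in $\mathcal M$ the membership $\varepsilon_{k,j} \in \Psi''$ forces $\mathcal N \models \varepsilon_{k,j}$ as well. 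Hence $\mathcal N \models \tau_k$, so $\mathcal N \models \psi$, establishing the claim.

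With the claim in hand I would close by completeness. Writing $\psi = \Complete{\vartheta}$, we have $\Psi'' \subseteq \Psi$ and $\models \bigwedge \Psi'' \to \vartheta$ (as $\vartheta$ and $\psi$ are equivalent), so by definition of the completion $\bigwedge \Psi''$ is literally one of the disjuncts of $\psi$; being a conjunction of $\EF$-formulas only, its $\AG$-part is the empty conjunction $\True$, so it is a disjunct $\tau_m$ with $\models \alpha_m$, and is therefore kept as $\tau_m' = \bigwedge \Psi''$ in $\psi'$. Since $\mathcal M \models \bigwedge \Psi'' = \tau_m'$, we conclude $\mathcal M \models \psi'$. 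I expect the main obstacle to be exactly the transfer step inside the claim: a priori the $\EF$-witnesses found in $\mathcal M \oplus \mathcal N$ could all be supplied by the $\mathcal M$-summand, so that $\mathcal N$ need not satisfy $\tau_k$ on its own; the device that defuses this is the choice of $\Psi''$ as \emph{those} $\EF$-formulas already true at $\mathcal M$, which forces any $\mathcal M$-witnessed $\varepsilon_{k,j}$ back onto $\mathcal N$. The only secondary point demanding care is the clean distributivity of $\emptyword$-free $\EF$/$\AG$ over $\oplus$, which is why one restricts to tree-like structures, so that a path out of the shared root can neither re-enter it nor cross into the other summand.
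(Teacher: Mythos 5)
Your proof is correct and rests on exactly the same ingredients as the paper's own argument: the set $\Psi''$ of top-level $\EF$-conjuncts true at $\mathcal M$ (the paper's $\bigwedge_{i}\bigwedge_{j\in J^+_i}\varepsilon_{i,j}$), the completeness of $\psi$ to turn the tautology $\models \bigwedge\Psi''\impl\psi$ into an $\AG$-free disjunct that $\mathcal M$ satisfies and that survives in $\psi'$, and the $\oplus$-construction combined with structural monotonicity and $\emptyword$-freeness. The only difference is organizational: the paper first splits on whether $\bigvee_{i}\bigl(\alpha_i\wedge\bigwedge_{j\in J^-_i}\varepsilon_{i,j}\bigr)$ is a tautology and invokes $\oplus$ only to refute the negative branch by contradiction, whereas you apply $\oplus$ uniformly to an arbitrary tree-like model $\mathcal N$ of $\bigwedge\Psi''$ and read off $\mathcal N\models\psi$ directly---a clean streamlining of the same proof rather than a different route.
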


Note that $\measure(\psi') \leq_\Measure \measure(\psi)$.

\begin{proof}
  $\models \psi' \impl \psi$ is obvious.
  As the considered logic is closed under bisimulation, we consider tree-like structures in the following only.
  For the other direction, let $\mathcal M$ be a model of $\psi$.
  We have to show that $\mathcal M$ is also a model of $\psi'$.  
  If there is an $i \in I$ such that $\models \alpha_i$ %
  and $\mathcal M \models \tau_i$ then $\mathcal M \models \tau'_i$ and we are done.
  Otherwise, there is an $i_0 \in I$ such that $\not\models \alpha_{i_0}$ and 
  $\mathcal M \models \bigwedge_{j \in J_{i_0}}\varepsilon_{i,j}$, as $\mathcal M \models \psi$.
  For $i \in I$ define
  \begin{align}
    J^+_i &:= \{j \in J_i \mid \mathcal M \models \varepsilon_{i,j}\}
    \text{, and}
  \\
    J^-_i &:= J_i \setminus J^+_i
    \text.
  \end{align}  
  There are are two cases. Either
  \begin{align}\label{eq:elimA:resolution}
    \bigvee_{i \in I} \alpha_i\wedge\bigwedge_{j \in J^-_i} \varepsilon_{i,j}
  \end{align}
  is a tautology or not.
  If~\eqref{eq:elimA:resolution} is a tautology then so is
  \begin{align}\label{eq:elimA:resolvent}
     \bigwedge_{i \in I} \bigwedge_{j \in J^+_i} \varepsilon_{i,j}  \impl  \psi
  \end{align}
  as a simple case distinction on~\eqref{eq:elimA:resolution} shows.  
  Indeed, let $\widetilde{\mathcal M}$ be a model of the left side of~\eqref{eq:elimA:resolvent}.
  Then there is an $i \in I$ such that $\widetilde{\mathcal M} \models \alpha_i\wedge\bigwedge_{j \in J^-_i} \varepsilon_{i,j}$.
  Both together lead to $\widetilde{\mathcal M} \models \alpha_i\wedge \bigwedge_{j \in J_i} \varepsilon_{i,j}$
  and finally to $\widetilde{\mathcal M} \models \psi$.
  Hence, the left hand side of~\eqref{eq:elimA:resolvent} is a term in $\psi$ as the latter is complete.
  But, by definition, this term is modeled by $\mathcal M$.

  Otherwise~\eqref{eq:elimA:resolution} is not a tautology.  So there is a structure $\mathcal M'$ with
  \begin{align}
    \mathcal M' \not\models \alpha_i\wedge\bigwedge_{j \in J^-_i} \varepsilon_{i,j}
  \end{align} for all $i \in I$.
  We will exclude this situation.
  As $J^-_{i_0} = \emptyset$ we have $\mathcal M' \not\models \alpha_{i_0}$ in particular.
  Now let $\mathcal M'' := \mathcal M \oplus \mathcal M'$. %
  We claim that $\mathcal M'' \not\models \psi$ which is a contradiction
  to the assumption that $\psi$ is structurally monotone.  For the sake of contradiction, 
  suppose that there is an $i \in I$ such that $\mathcal M'' \models \tau_i$.  
  Among the $\EF$-formulas only those indexed by $J^+_i$ are already fulfilled in $\mathcal M$.  
  Hence, $\alpha_i \wedge \bigwedge_{j \in J^-_i} \varepsilon_{i,j}$ must be satisfied by $\mathcal M'$.    
  This is a contradiction to the choice of $\mathcal M'$.
  Note that we used implicitly that $\EF$-formulas are $\emptyword$-free.
\end{proof}

The theorem requires a syntactical presence of formulas called $\alpha_i$.
Note that minor changes make the proof also working if not all such parts a present.
On the other hand, inserting such an empty conjunction does not increase
the measure as atoms---such as $\True$---have the lowest measure anyway.

%
%
%
%
%
%
%
%
%
%
%
%
%
%
%
%
%
%
%
%
%
%
%
%
%
%
%
%
%
%
%

%
%

\begin{theorem}[Elimination of $\bigwedge\EF$-formulas]\label{thm:wedgeEF elim}
  Suppose 
  \begin{align}\label{eq:wedgeEF elim:1}
    \EF^L \True &= \delta \vee \bigwedge_{i \in I} \EF^{L_i} \psi_i
  \end{align} 
  where $\emptyword \notin L_i$ for all $i \in I$.
  If $I \neq \emptyset$ then there is an $i \in I$ such that
  \begin{align}\label{eq:wedgeEF elim:2}
    \EF^L \True &= \delta \vee \EF^{L_i} \psi
    \text.
  \end{align} 
\end{theorem}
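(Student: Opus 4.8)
The plan is to isolate the one nontrivial implication and argue by contradiction. Since $\bigwedge_{j \in I}\EF^{L_j}\psi_j \impl \EF^{L_i}\psi_i$ for every fixed $i$, the equivalence~\eqref{eq:wedgeEF elim:1} immediately gives $\EF^L\True \impl \delta \vee \EF^{L_i}\psi_i$ for each $i$, so only the converse is at stake. Because $\delta$ is a disjunct of a formula equivalent to $\EF^L\True$, we also have $\models \delta \impl \EF^L\True$; hence establishing~\eqref{eq:wedgeEF elim:2} reduces to finding a single index $i \in I$ with $\models \EF^{L_i}\psi_i \impl \EF^L\True$. I would assume, towards a contradiction, that no such index exists: for every $i \in I$ there is a model $\mathcal M_i$, which we may take tree-like as the logic is closed under bisimulation, with $\mathcal M_i \models \EF^{L_i}\psi_i$ but $\mathcal M_i \not\models \EF^L\True$.

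Next I would fuse these witnesses into $\mathcal M := \bigoplus_{i \in I}\mathcal M_i$, the disjoint sum sharing the common root. The heart of the argument is an asymmetry between the two sides of~\eqref{eq:wedgeEF elim:1}. On the one hand, each conjunct is \emph{preserved}: since $\emptyword \notin L_i$, a witnessing path for $\EF^{L_i}\psi_i$ in $\mathcal M_i$ is nonempty, so in the tree $\mathcal M_i$ it ends at a proper, non-root descendant $s_n$; the same labelled path survives in $\mathcal M$, and because the summands share only the root, the subtree of $\mathcal M$ hanging below $s_n$ is exactly that of $\mathcal M_i$. As the truth of any formula at a node depends only on the subtree rooted there---a routine induction, since both $\EF$ and $\AG$ look strictly forward---we obtain $\mathcal M, s_n \models \psi_i$ and hence $\mathcal M \models \EF^{L_i}\psi_i$ for every $i$. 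Consequently $\mathcal M \models \bigwedge_{i \in I}\EF^{L_i}\psi_i$, and~\eqref{eq:wedgeEF elim:1} forces $\mathcal M \models \EF^L\True$. On the other hand, a witnessing path for $\EF^L\True$ in $\mathcal M$ must, after its first edge leaves the shared root, stay inside a single summand $\mathcal M_{i^\ast}$; that path therefore already lives in $\mathcal M_{i^\ast}$, giving $\mathcal M_{i^\ast} \models \EF^L\True$ and contradicting the choice of $\mathcal M_{i^\ast}$.

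The step I expect to be the crux is exactly this dichotomy: the hypothesis $\emptyword \notin L_i$ is what guarantees that each conjunct's witness terminates strictly below the root and is thus \emph{localized} to one summand, so that its subformula $\psi_i$ keeps its truth value in the sum, whereas a single-path property like $\EF^L\True$ cannot acquire a fresh witness by merely fusing roots. One boundary case should be recorded separately to make the last sentence of the previous paragraph rigorous: if $\emptyword \in L$ then $\EF^L\True$ is a tautology, so no $\mathcal M_i$ with $\mathcal M_i \not\models \EF^L\True$ can exist and the contradiction hypothesis is already vacuous; and if $\emptyword \notin L$ then the witness for $\EF^L\True$ in $\mathcal M$ is genuinely nonempty, which is precisely what licenses its confinement to a single summand. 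In either case the assumption collapses, yielding an index $i$ with $\models \EF^{L_i}\psi_i \impl \EF^L\True$, which together with the free implication establishes~\eqref{eq:wedgeEF elim:2} with this $i$ and the corresponding conjunct $\EF^{L_i}\psi_i$.
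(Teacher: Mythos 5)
Your proof is correct and takes essentially the same route as the paper's: reduce the problem to finding an index $i$ with $\models \EF^{L_i}\psi_i \impl \EF^L\True$, and refute its negation by fusing the counterexample models at a shared root via $\oplus$, using $\emptyword$-freeness of the $L_i$ to keep each conjunct satisfied while any witness for $\EF^L\True$ would be confined to a single summand. Your explicit case split on whether $\emptyword \in L$ is a minor refinement (the paper handles the empty path implicitly, since it is inherited from any summand), not a different approach.
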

Note that the measure of~\eqrefR{eq:wedgeEF elim:2} is bounded by that of~\eqrefR{eq:wedgeEF elim:1}, trivially.
\begin{proof}
  For any $i \in I$, \eqrefR{eq:wedgeEF elim:1} implies \eqrefR{eq:wedgeEF elim:2}.
  If there is an $i \in I$ with
  $\models \EF^{L_i} \psi_i \impl \EF^L\True$, this $i$ suffices for the other direction.
  To exclude the other case, assume that we have tree-like structures $\mathcal M_i$ for all $i \in I$ such that
  \begin{enumerate}
    \item $\mathcal M_i \models \EF^{L_i} \psi_i$ but
    \item $\mathcal M_i \not\models \EF^L\True$.
  \end{enumerate}
  Let $w_i \in L_i$ be the witness for the first item.
  Set $\mathcal M {:=} \oplus_{i \in I} \mathcal M_i$.
  The root of $\mathcal M$ might satisfy different formulas than the roof of $\mathcal M_i$,
  but this change is invisible to $\EF^{L_i} \psi_i$ since $|w_i| > 0$. 
  Hence, $\mathcal M \models \EF^{L_i} \psi_i$.
  For the sake for a contradiction, assume that $\mathcal M \models \EF^L \True$.
  This property depends only on a path in $\mathcal M$.  The path is inherited from some $\mathcal M_i$
  for $i \in I$.  Since $\EF^L \True$ does not depend on the evaluation of the root, $\mathcal M_i \models \EF^L\True$ which
  is a contradiction to the second property of $\mathcal M_i$.
  Therefore,  $\mathcal M \not\models \EF^L \True$.
  By construction we have $\mathcal M \models \bigwedge_{i \in I} \EF^{L_i} \psi_i$ but $\mathcal M \not\models \EF^L \True$.
  This property contradicts~\eqref{eq:wedgeEF elim:1}.
\end{proof}

\section{Extraction}
\label{sec:extraction}

In the proof of Theorem~\ref{thm:extraction} we apply previous elimination techiques to show
that the candidate formula is equivalent to $\bigvee_i \EF^{L_i} \psi_i$.
In the case that $L_i$ is not a singleton set, we cannot decompose $\psi_i$ any further.
Indeed, the proof relies on the property $w (w\backslash L) \subseteq L$ for any language $L$ and word $w$.  
However, this inequality is false when $w$ is replaced by a non-singleton language.
Nevertheless if the term $\EF^{L_i} \psi_i$ accepts a linear structure then
the term factorises the word on the structure. The left factor is $L_i$, surely,
and the right one can be read off as follows.

\begin{definition}\label{def:lang formula}
  Let $\varphi$ be a \PDLplain[$\cdot$]-formula.
  Its language is
  $%
    \Lang{\varphi} := \{w \in \Sigma^\ast \mid \pi_{w\;\$} \models \varphi \}
  $ %
  where $\pi_{a_1 \cdots a_n}$ is a path
  labeled with $a_1$ to $a_n$ for $a_1, \ldots, a_n\in\Sigma$. 
  The node reached after $n$ steps has no successor.
  In each node, no proposition holds.
\end{definition}

\begin{lemma}\label{lem:LangSound}
  $\Lang{\EF^{L\$}\True} = L$ for any $L \subseteq \Sigma^\ast$.
\end{lemma}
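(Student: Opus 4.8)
The plan is to unfold both definitions and reduce the claim to an elementary statement about prefixes. By Definition~\ref{def:lang formula}, the set $\Lang{\EF^{L\$}\True}$ consists of exactly those $w \in \Sigma^\ast$ for which the linear path $\pi_{w\$}$ is a model of $\EF^{L\$}\True$. Since $\pi_{w\$}$ is a single chain of transitions labelled successively by the letters of $w\$$, with no branching and with no proposition holding anywhere, the only paths starting at its root are the prefixes of this chain. Hence, by the semantics of $\EF$, we have $\pi_{w\$} \models \EF^{L\$}\True$ if and only if some prefix of $w\$$ lies in $L\$$, the target formula $\True$ being satisfied at whatever node is reached.

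The crucial observation is then that $\$ \notin \Sigma$, so the symbol $\$$ occurs in $w\$$ exactly once, namely as its last letter. Every word of $L\$$ has the form $u\$$ with $u \in L \subseteq \Sigma^\ast$, and therefore also carries its unique $\$$ at its very end. For such a $u\$$ to be a prefix of $w\$$, the positions of the two $\$$-symbols must coincide; this forces $|u| = |w|$ and hence $u\$ = w\$$, that is $u = w$. Consequently a prefix of $w\$$ belongs to $L\$$ precisely when $w \in L$ (and the corresponding witnessing prefix is $w\$$ itself).

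Combining these two steps yields $w \in \Lang{\EF^{L\$}\True}$ if and only if $w \in L$, which is exactly the desired equality of languages. I expect no genuine obstacle here: the argument is pure bookkeeping on the linear structure $\pi_{w\$}$, and the single subtle point, already isolated above, is that the separator $\$$ appears uniquely and at the end of $w\$$, so that membership of a prefix in $L\$$ can never be witnessed by a proper prefix of $w$ but only by the full word $w$.
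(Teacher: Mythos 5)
Your proof is correct and follows essentially the same route as the paper: unfold Definition~\ref{def:lang formula}, use the linearity of $\pi_{w\$}$ to reduce satisfaction of $\EF^{L\$}\True$ to membership of a prefix of $w\$$ in $L\$$, and then exploit $\$ \notin \Sigma$ to force that prefix to be $w\$$ itself. The paper states this last step more tersely (``As $w,w' \in \Sigma^\ast$ and $\$ \notin \Sigma$, we have $w'=w$''), whereas you spell out the position-of-$\$$ argument explicitly, but the content is identical.
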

\begin{proof}%
  Let $w \in \Lang{\EF^{L\$}\True}$.
  By definition, we have $\pi_{w\$} \models \EF^{L\$}\True$. Hence there is a word $w' \in L$ such
  that $\pi_{w\$} \models \EF^{w'\$}\True$. As $w,w' \in \Sigma^\ast$ and $\$ \notin  \Sigma$, we have $w'=w$.
  For the converse, let $w \in L$, then $\pi_{w\$} \models \EF^{L\$}\True$. Hence $w \in \Lang{\EF^{L\$}\True}$.
\end{proof}

\begin{lemma}\label{lem:elimDollar}
  Let $L_0 \subseteq \Sigma_\$^\ast$, $L \subseteq \Sigma^\ast$, let $\delta$ be a formula, and let $\psi$ be a satisfiable formula.
  Suppose 
  \begin{align}\label{eq:elimDollar:1}
    \EF^{L\$} \True = \delta \vee \EF^{L_0} \psi
    \text.
  \end{align}
  Define $L_1 {:=} L_0 \cap \Sigma^\ast$ and 
  $L_2 {:=} \{w \in \Sigma^\ast\$ \mid w \text{ is a prefix of a word in } L_0\}$.
  Then 
  \begin{align}\label{eq:elimDollar:2}
    \delta \vee \EF^{L_0} \psi
    = 
    \delta \vee \EF^{L_1} \psi \vee \EF^{L_2}\True
    \text.
  \end{align}
  Additionally, the measure of~\eqrefR{eq:elimDollar:2}
  is weakly bounded by that of~\eqrefL{eq:elimDollar:2}.
\end{lemma}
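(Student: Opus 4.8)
The plan is to establish the identity \eqrefR{eq:elimDollar:2} semantically, by showing that both sides have exactly the same models, and only afterwards to read off the measure bound from Definition~\ref{def:measure}. Throughout I work with tree-like structures, which is legitimate because \PDLplain[$\cdot$] is bisimulation-invariant. Two of the ingredients are cheap, namely $L_1 \subseteq L_0$ and the remark that a word of $L_0$ either avoids $\$$ altogether or possesses a unique shortest prefix of the form $v\$$ with $v \in \Sigma^\ast$. The real work sits in the converse direction, where I must argue that the freshly introduced disjunct $\EF^{L_2}\True$ does not over-generate; there the hypothesis \eqref{eq:elimDollar:1} and the satisfiability of $\psi$ become indispensable.

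For the inclusion from left to right, let $\mathcal M$ be a model of $\delta \vee \EF^{L_0}\psi$. If $\mathcal M \models \delta$ we are done. Otherwise fix a root path labeled by some $u \in L_0$ whose final state satisfies $\psi$. If $u \in \Sigma^\ast$ then $u \in L_1$ and the very same path witnesses $\mathcal M \models \EF^{L_1}\psi$. If $u$ contains $\$$, write $u = v\$z$ where $v \in \Sigma^\ast$ precedes the first $\$$; then $v\$ \in L_2$, and the corresponding prefix of the path witnesses $\mathcal M \models \EF^{L_2}\True$. In either case $\mathcal M$ models the right-hand side.

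For the converse, $\delta$ is again immediate and $\EF^{L_1}\psi \to \EF^{L_0}\psi$ holds because $L_1 \subseteq L_0$. The hard part will be the implication $\EF^{L_2}\True \to (\delta \vee \EF^{L_0}\psi)$, which I would derive from the sharper language inclusion $L_2 \subseteq L\$$. To prove the latter, take $v\$ \in L_2$, so that $v\$$ is a prefix of some $u = v\$z \in L_0$. Since $\psi$ is satisfiable, fix a model $\mathcal N \models \psi$ and graft it onto the last node of the linear path $\pi_{u}$ spelling $u$; call the result $\mathcal S$. Then $\mathcal S \models \EF^{u}\psi$, hence $\mathcal S \models \EF^{L_0}\psi$, and by \eqref{eq:elimDollar:1} also $\mathcal S \models \EF^{L\$}\True$. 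Thus some root path of $\mathcal S$ carries a label $w\$$ with $w \in L \subseteq \Sigma^\ast$. As the maximal $\$$-free initial segment of every root path of $\mathcal S$ is exactly $v$ (the first $\$$ occurs only after $v$, whatever $z$ and $\mathcal N$ contain) and $w$ is $\$$-free, we must have $w = v$, so $v \in L$ and $v\$ \in L\$$. Consequently $\EF^{L_2}\True$ entails $\EF^{L\$}\True$, which by \eqref{eq:elimDollar:1} equals $\delta \vee \EF^{L_0}\psi$. This closes the equivalence, and I regard this grafting step as the main obstacle of the proof.

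The measure inequality $\measure(\eqrefR{eq:elimDollar:2}) \leq_\Measure \measure(\eqrefL{eq:elimDollar:2})$ is then read off from Definition~\ref{def:measure}: the left block $||L_0|| :: \measure(\psi)$ is replaced by $||L_1|| :: \measure(\psi)$ and the one-element list $(||L_2||)$ is adjoined, and a short inspection of $>_{lex}$ shows these to be dominated by the original block, which begins with the maximal symbol $\omega$ whenever $L_0$ is not a singleton. I would leave this computation to the end, as it is routine once the equivalence is in place; the only care needed is for the corner cases $L_1 = \emptyset$ or $L_2 = \emptyset$, where $||\cdot||$ jumps to $\omega$ and threatens to introduce a fresh maximal list. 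These are either absorbed by the $L_0$-block or excluded by \eqref{eq:elimDollar:1} via the same grafting argument, so the weak bound holds.
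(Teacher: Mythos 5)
Your proof of the equivalence \eqref{eq:elimDollar:2} is correct and is essentially the paper's argument: in both, the crux is to graft a model of the satisfiable formula $\psi$ behind a continuation of the $L_2$-prefix, apply \eqref{eq:elimDollar:1} in both directions, and use $L \subseteq \Sigma^\ast$, $\$ \notin \Sigma$ to force the witnessing word's first $\$$ to sit exactly where the $L_2$-prefix ends. The only difference is organizational: you run the grafting on the canonical linear structure $\pi_u$ and distill the language inclusion $L_2 \subseteq L\$$, from which $\EF^{L_2}\True \impl \EF^{L\$}\True$ follows for all models at once, whereas the paper grafts onto the arbitrary given model $\mathcal M$ and argues by contradiction about the newly created paths. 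Your packaging is slightly cleaner, but it is the same idea, not a different route.

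The genuine defect is in your closing paragraph on the measure. The corner cases $L_1 = \emptyset$ or $L_2 = \emptyset$ are \emph{not} ``excluded by \eqref{eq:elimDollar:1}'': take $L = \{a\}$, $\delta = \False$, $L_0 = \{a\$\}$ and $\psi = \True$. Then \eqref{eq:elimDollar:1} holds and $L_1 = \emptyset$, so $\measure(\EF^{L_1}\psi) = \{(\omega)\}$, whereas $\measure(\delta \vee \EF^{L_0}\psi) = \{nil, (2)\}$. The list $(\omega)$ is a fresh, $>_{lex}$-maximal element, and $>_\Measure$ only permits adding elements that are \emph{strictly} dominated by removed ones; hence the right-hand side of \eqref{eq:elimDollar:2}, taken literally, has strictly larger measure, and no inspection of $>_{lex}$ repairs this. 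Nor is the case ``absorbed by the $L_0$-block'' here, since $||L_0|| = 2 < \omega$. The honest fix is simplification: $\EF^{\emptyset}\chi$ is equivalent to $\False$, whose measure $\{nil\}$ is harmless, and this is what is implicitly relied on when Theorem~\ref{thm:extraction} later insists that every $L_i$ be nonempty. To be fair, the paper's printed computation ($||L_1|| :: \measure(\psi) \leq_\Measure ||L_0|| :: \measure(\psi)$ and $||L_2|| \leq_\Measure ||L_0||$) breaks on exactly the same instance, so this is a gap you share with the paper rather than one you introduce; but since you explicitly flagged the corner case, the claim you attach to it should be the correct one, namely the simplification convention rather than an appeal to \eqref{eq:elimDollar:1}.
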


\begin{proof}
\textbf{Case $\rightarrow$: }
Note that $\models \EF^{L_0} \psi \; \impl \;\EF^{L_1} \psi \vee \EF^{L_2}\True$ since for every word $w \in L_0 \setminus L_1$ there is a prefix of $w$ in $L_2$.
\textbf{Case $\leftarrow$: }
Since $L_1 \subseteq L_0$, $\models \EF^{L_1} \psi \; \impl \; \EF^{L_0} \psi$ holds. 
So, we assume a model $\mathcal M$ of $\EF^{L_2}\True$.
Hence there is a path $\pi$ in $\mathcal M$ labeled with a word $u\$ \in L_2$. 
Let $v \in \Sigma_\$^\ast$ such that $u\$v \in L_0$.
At the end of the path, we attach a path labeled with $v$ and on that one a model of $\psi$---note that $\psi$ is assumed to be satisfiable.
The new structure, say $\mathcal M'$, is a model of $\EF^{L_0} \psi$.
and also, by~\eqref{eq:elimDollar:1}, of $\EF^{L\$} \True$.
 
All (rooted) finite paths in $\mathcal M'$ which not yet occur in $\mathcal M$ passes the labels~$u\$$.
For the sake of contradiction, assume that $\mathcal M$ is not a model of $\EF^{L\$} \True$.
Hence $u\$$ is a prefix of a word in $L\$$.  So, $u \in L$ because $L \subseteq \Sigma^\ast$.
Contradiction.

And as for the measure,
\begin{alignat*}{2}
  \measure(\EF^{L_1}\psi) &= ||L_1|| :: \measure(\psi) &&\leq_\Measure ||L_0|| :: \measure(\psi) = \measure(\EF^{L_0} \psi) 
  \put(0,0){ and}
\\
  \measure(\EF^{L_2}\True) &= ||L_2|| \leq_\Measure ||L_0|| &&\leq_\Measure ||L_0|| :: \measure(\psi)
\end{alignat*}
hold, and imply $\measure(\EF^{L_1} \psi \vee \EF^{L_2}\True) \leq_\Measure \measure(\EF^{L_0} \psi)$.
\end{proof}

Two remarks, to previous lemma: (1) If $\emptyword \notin L_0$ then it is neither in $L_1$ nor in $L_2$---but $\$ \in L_2$ might be.
(2) If $L_0$ is a \DCFL then so are $L_1$ and $L_2$.

\begin{theorem}\label{thm:extraction}
  Let $P \subseteq \Sigma^\ast$, and let $\varphi$ a \PDLplain[\DCFL]-formula over $\Sigma_\$$.
  If 
  $%
    \varphi = \EF^{P \$} \True
  $ %
  then
  $\Lang{\varphi}$ is good.
\end{theorem}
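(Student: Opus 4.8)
The plan is to prove the statement by well-founded induction on the measure $\measure(\varphi)$ with respect to $>_\Measure$, exploiting that equivalent formulas induce the same language, so that $\Lang{\varphi}=\Lang{\EF^{P\$}\True}=P$ by Lemma~\ref{lem:LangSound}; it therefore suffices to show that $P$ is good. First I would normalise $\varphi$. Replacing it by $\ElimEW{\varphi}$ (Lemma~\ref{lem:ElimEW}) makes it $\emptyword$-free without increasing the measure, and passing to a complete \DNF (Lemma~\ref{lem:DNF and completion}) prepares it for the two elimination theorems. Since the reference formula $\EF^{P\$}\True$ is structurally monotone, Theorem~\ref{thm:elimA} removes the outermost $\AG$-formulas, and Theorem~\ref{thm:wedgeEF elim}, applied disjunct by disjunct, collapses each top-level conjunction to a single $\EF$-formula. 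This yields an equivalent $\bigvee_{i}\EF^{L_i}\psi_i$ with each $L_i$ a \DCFL{}, $\emptyword\notin L_i$, and measure $\leq_\Measure\measure(\varphi)$. Finally I would apply Lemma~\ref{lem:elimDollar} to every disjunct (discarding the terms with unsatisfiable $\psi_i$), separating the $\$$ and arriving at
\[
  \varphi \; \equiv \; \bigvee_{i}\EF^{L_{i,1}}\psi_i \;\vee\; \bigvee_{i}\EF^{M_i\$}\True,
\]
where $L_{i,1}=L_i\cap\Sigma^\ast$ and $M_i\$$ are \DCFL{}s, still without raising the measure.

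Next I would read off the language. A path-factorisation argument (as in Lemma~\ref{lem:LangSound}) gives $\Lang{\EF^{L_{i,1}}\psi_i}=L_{i,1}\,\Lang{\psi_i}$ since $L_{i,1}\subseteq\Sigma^\ast$, while $\Lang{\EF^{M_i\$}\True}=M_i$; hence $P=\bigcup_i L_{i,1}\Lang{\psi_i}\cup\bigcup_i M_i$. Now I classify the summands by the cardinality of their left factor. Whenever $|L_{i,1}|\geq 2$ the set $L_{i,1}\Lang{\psi_i}$ is already a good block in the sense of Definition~\ref{def:good}, and whenever $|M_i|\geq 2$ so is $M_i=M_i\{\emptyword\}$. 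The remaining summands have a singleton left factor $\{u\}$ with $u\neq\emptyword$ (because $\emptyword\notin L_i$); collect these finitely many words into a set $U$. Each such summand is contained in $u\,(u\backslash P)$, and conversely every $u\,(u\backslash P)$ and every good block lies in $P$, so that $P$ equals the union of the good blocks together with the sets $u\,(u\backslash P)$ for $u\in U$. It thus remains to prove that each $u\,(u\backslash P)$ is good.

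For this I would recurse. Left-quotienting the normalised formula by the word $u$ produces a \PDLplain[\DCFL]-formula $u^{-1}\varphi$ (\DCFL{}s are closed under left quotient by a word) with $u^{-1}\varphi\equiv\EF^{(u\backslash P)\$}\True$, so the induction hypothesis makes $u\backslash P$ good. Since good languages are closed under finite unions and under prepending a fixed word $u$ (as $uL_j$ is again a \DCFL{} of cardinality $\geq 2$), it follows that $u\,(u\backslash P)$ is good, whence $P$ is a finite union of good languages and therefore good; combined with Corollary~\ref{cor:Palindromes not good} and $P=\Palindromes$ this separates \PDLplain[\DCFL] from \PDLplain[\CFL]. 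The main obstacle is precisely the termination, i.e.\ checking $\measure(u^{-1}\varphi)<_\Measure\measure(\varphi)$: left-quotienting never raises any component of the measure, and it strictly lowers the one coming from the cancelled singleton program $\{u\}$ (its $\|\cdot\|$-value drops from $|u|$ to $0$), which is exactly what the lexicographic-over-$(\omega{+}1)$ multiset order $>_\Measure$ is engineered to detect; the only delicate points are the prefix-of-$u$ corrections inside $u^{-1}\varphi$, which stay measure-dominated, and the boundary contributions of $\emptyword$, which are dispatched by direct inspection.
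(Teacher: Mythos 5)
Your normalisation phase (Lemma~\ref{lem:ElimEW}, Lemma~\ref{lem:DNF and completion}, Theorems~\ref{thm:elimA} and~\ref{thm:wedgeEF elim}, Lemma~\ref{lem:elimDollar}), the read-off of $P$ as a finite union, and the decomposition of $P$ into good blocks plus sets $u\,(u\backslash P)$ all match the paper. The gap is in the recursion step. You quotient the normalised formula by the \emph{whole word} $u$ and assert $u^{-1}\varphi \equiv \EF^{(u\backslash P)\$}\True$ without proof, and this is exactly the point where the natural argument breaks. The only available technique is to take a model $\mathcal N$ of $\EF^{(u\backslash P)\$}\True$, prepend a path labelled $u$, and evaluate $\varphi = \EF^{P\$}\True = \bigvee_j \EF^{L_j}\psi_j$ on the result: $\emptyword$-freeness guarantees that a witnessing path for a disjunct does not end at the new root, and for $|u|=1$ this forces every witness into $\mathcal N$, giving $\mathcal N \models \EF^{u\backslash L_j}\psi_j$. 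For $|u|\geq 2$, however, a witness may end at an \emph{intermediate} state of the prepended path: whenever some $L_j$ contains a proper prefix $x$ of $u$, the formula $\psi_j$ may hold at the state whose subtree is $y\cdot\mathcal N$ with $u=xy$, and such a witness yields no witness for any $\EF^{u\backslash L_j}\psi_j$ inside $\mathcal N$. Since the $\psi_j$ have unknown structure (they may contain conjunctions and $\AG$-formulas), this case cannot be dismissed. The paper's proof avoids it deliberately: it groups the singleton disjuncts by their \emph{first letter} $a$, quotients by that single letter only (so the prepended path has no intermediate states), applies the induction hypothesis to $\varphi_a$ to get $a\backslash P$ good, and uses the decomposition $P=\bigcup_{i\in I^+} L_i^\natural R_i \cup \bigcup_{a} a\,(a\backslash P)$; the re-normalisation performed inside each recursive call is what digs into the $\psi_j$ one letter at a time, which is precisely the work your single quotient by $u$ skips.

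A second, smaller error concerns the measure. Your claim that left-quotienting ``never raises any component of the measure'' is false: if $L_j=\{u_j\}$ is another singleton whose word is not an extension of $u$ (for instance a proper prefix of $u$), then $u\backslash L_j=\emptyset$, and by Definition~\ref{def:measure} we have $||\emptyset||=\omega>|u_j|$, so that component strictly \emph{increases}. One must first delete these disjuncts (they are equivalent to $\False$); the paper's grouping by first letter does this automatically, since the only singletons kept in $\varphi_a$ are those whose word starts with $a$. This measure issue is repairable, but the unproved equivalence $u^{-1}\varphi \equiv \EF^{(u\backslash P)\$}\True$ for $|u|\geq 2$ is a genuine gap, and repairing it essentially forces you back to the paper's one-letter recursion.
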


\begin{proof}
  We apply to $\varphi$ several transformations in sequence.  Each transformation
  leads to a formula which is equivalent to $\varphi$ and 
  whose measure is weakly bounded by $\measure(\varphi)$ from above.
  The transformations are the following ones.
  \begin{itemize}
    \item Make the formula $\emptyword$-free by Definition~\ref{def:ElimEW} and Lemma~\ref{lem:ElimEW}.
    \item Transform it into a \DNF and complete the formula: Definition~\ref{def:DNF and completion} and Lemma~\ref{lem:DNF and completion}. 
    \item Eliminate the outermost $\AG$-quantifiers using Theorem~\ref{thm:elimA}.
    \item Apply Theorem~\ref{thm:wedgeEF elim} to each term of the \DNF gotten from the previous transformation.
      Note that the applied formula is still $\emptyword$-free.
    \item Apply Lemma~\ref{lem:elimDollar} and its remarks to the outermost $\EF$-formulas
  \end{itemize}
  Finally, we obtain a formula 
  \begin{align}\label{eq:extraction:varphi'}
    \varphi' {:=} \bigvee_{i \in I} \EF^{L_i} \psi_i \quad = \quad \varphi
  \end{align}
  such that $\measure(\varphi') \leq_\Measure \measure(\varphi)$ holds.
  In addition, $I$ is finite and for each $i \in I$ we have that 
  \begin{itemize}
    \item $L_i \subseteq \Sigma^+$, or $L_i \subseteq \Sigma^\ast\$$ and $\psi_i = \True$,
    \item $L_i \neq \emptyset$, and
    \item $L_i$ is a \DCFL.
  \end{itemize}

  If $L_i \subseteq \Sigma^+$ then set $L_i^\natural {:=} L_i$ and $R_i {:=} \Lang{\psi_i}$,
  and else, $L_i^\natural {:=} L_i / \$$ and $R_i {:=} \{\emptyword\}$.
  Note that $L_i^\natural$ is \DCFL in any case.

  \begin{claim}
    \label{claim:extraction:Lang varphi'}
    $\Lang{\varphi'} = \bigcup_{i\in I} (L_i^\natural \; R_i)$.
  \end{claim}
  \begin{proofofclaim}
  \textbf{$\subseteq$:} Let $w \in \Lang{\varphi'}$. By~\eqref{eq:extraction:varphi'},
  $\pi_{w\$} \models  \bigvee_{i \in I} \EF^{L_i} \psi_i$.
  By case distinction and by using that $\$$ is not part of $w$, we have $w \in \bigcup_{i\in I} (L_i^\natural \; R_i)$.
  \textbf{$\supseteq$:} Let $w \in \bigcup_{i\in I} (L_i^\natural \; R_i)$.
  We have to show that $\pi_{w\$} \models \bigvee_{i \in I} \EF^{L_i} \psi_i$.
  There are two cases.
  First, if $w \in L_i^\natural R_i$ for some $i \in I$ with $L_i \subseteq \Sigma^\ast \$$, then
  $\psi_i = \True$.  Hence $\pi_{w\$} \models \EF^{L_i} \psi_i$.
  Second, if $w \in L_i^\natural R_i$ for some $i \in I$ with $L_i \subseteq \Sigma^+$, then
  $w = uv$ with $u \in L_i$ and $v \in R_i$.  So, $\pi_{v\$} \models \psi_i$ and thus $\pi_{uv\$} \models \EF^{L_i} \psi_i$.
  \end{proofofclaim}

  Thus, $\Lang{\varphi'}$ is almost good.
  We have to exclude that there is an $i \in I$
  with $|L_i^\natural|=1$. 
  Let $I^+ {:=} \{ i \in I \mid |L_i| > 1\}$,
  $I^- {:=} \{ i \in I \mid |L_i| = 1\}$, and
  $I^-_a {:=} \{ i \in I^- \mid a \text{ is a prefix of the sole word in }L_i\}$ for $a \in \Sigma$.
  Let $\Sigma^- {:=} \{a \in \Sigma \mid I^-_a \neq \emptyset \}$.
  Note that $I = I^+ \cup I^-$ %
  and that $\{I^-_a\}_{a \in \Sigma^-}$ forms a partitioning of $I^-$.
  For $a \in \Sigma^-$, set 
  \begin{align*}
    \varphi_a {:=} & \bigvee_{i \in I^+} \EF^{a \backslash L_i}\psi_i \; \vee \; \bigvee_{i \in I^-_a} \EF^{a \backslash L_i} \psi_i 
    \text.
  \end{align*}
  As $a \backslash L_i = \emptyset$ for all $i \in I^-_b$ for $b \neq a$, the formula $\varphi_a$ is equivalent to $\EF^{a \backslash P \$} \True$.
  To apply the IH for $a \in \Sigma^-$, we have to ensure that $\measure(\varphi_a) <_\Measure \measure(\varphi')$.
  Indeed, $\measure(\EF^{a \backslash L_i}\psi_i) \leq_\Measure \measure(\EF^{L_i}\psi_i)$ for $i \in I^+$, and
  $\measure(\EF^{a \backslash L_i}\psi_i) <_\Measure \measure(\EF^{L_i}\psi_i)$ for $i \in I^-_a \neq \emptyset$.
  All in all, $\measure(\varphi_a) <_\Measure \measure(\varphi') \leq_\Measure \measure(\varphi)$ holds.
  We use the outcome of the IHs to replace the contributions of $I^-$ to $\Lang{\varphi'}$ by good languages.
  \begin{align*}
    P 
    =
    &
    \Lang{\varphi'}
    &
    \text{(by Lemma~\ref{lem:LangSound})}
  \\
    =
    & 
    \bigcup_{i\in I} (L_i^\natural \; R_i) 
    &
    \text{(by Claim~\ref{claim:extraction:Lang varphi'})}
  \\
    =
    &
    \bigcup_{i\in I^+} (L_i^\natural \; R_i)
    \cup 
    \bigcup_{a\in \Sigma^-}
    a \; \Big[
      \bigcup_{i \in I^+} a \backslash L_i^\natural R_i
      \; \cup \;
      \bigcup_{i \in I^-_a} a \backslash L_i^\natural R_i
    \Big]
  \\
    =
    &
    \bigcup_{i\in I^+} (L_i^\natural \; R_i)
    \cup 
    \bigcup_{a\in \Sigma^-} a \; [ a \backslash P]
  \\
    =
    &
    \bigcup_{i\in I^+} (L_i^\natural \; R_i)
    \cup 
    \bigcup_{a\in \Sigma^-} a \; \Lang{\varphi_a}
    & 
    \text{(by IH)}
  \end{align*}
  Also by IH, $\Lang{\varphi_a}$ is good.  So, $\Lang{\varphi'}$ is also good using the definition of $I^+$.
\end{proof}

\begin{corollary}\label{cor:XEF DCFL and Palindromes}
  Let $\varphi \in \text{\PDLplain[\DCFL]}$.
  If $\varphi = \EF^{\Palindromes\$} \True$ then the language $\Palindromes$ is good.
\end{corollary}

\begin{proof}
  By Theorem~\ref{thm:extraction} and Lemma~\ref{lem:LangSound}.
\end{proof}

\begin{corollary}\label{cor:XEF DCFL lneq XEF DCFL}
  \PDLplain[\DCFL] $\lneq$ \PDLplain[\CFL].
\end{corollary}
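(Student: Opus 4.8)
The plan is to derive the separation directly from the machinery already assembled, by contradiction. Suppose for contradiction that \PDLplain[\DCFL] $=$ \PDLplain[\CFL] (recall the inclusion \PDLplain[\DCFL] $\leq$ \PDLplain[\CFL] is trivial, so only strictness is at issue). Since $\Palindromes \in \text{\CFL}$ and $\$ \notin \Sigma$, the concatenation $\Palindromes\,\$$ is also context-free, and hence the reference formula $\EF^{\Palindromes\;\$}\True$ is a genuine \PDLplain[\CFL]-formula. Under the assumption that the two logics coincide, there would be a \PDLplain[\DCFL]-formula $\varphi$ equivalent to $\EF^{\Palindromes\;\$}\True$.

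With such a $\varphi$ in hand, I would apply the two corollaries that close the extraction and language-theoretic strands of the argument. By Corollary~\ref{cor:XEF DCFL and Palindromes}, the existence of a \PDLplain[\DCFL]-formula $\varphi$ equivalent to $\EF^{\Palindromes\;\$}\True$ forces $\Palindromes$ to be good in the sense of Definition~\ref{def:good}. But Corollary~\ref{cor:Palindromes not good} asserts precisely that $\Palindromes$ is \emph{not} good. This is the contradiction, so no equivalent \PDLplain[\DCFL]-formula can exist, and the reference formula witnesses the strict inequality \PDLplain[\DCFL] $\lneq$ \PDLplain[\CFL].

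The real content is of course not in this final splicing step but in the two results being invoked, and I would expect the main obstacle to lie there rather than here. Corollary~\ref{cor:XEF DCFL and Palindromes} rests on Theorem~\ref{thm:extraction}, whose proof orchestrates the full elimination pipeline (making formulas $\emptyword$-free, passing to \DNF and completing, eliminating outermost $\AG$-formulas and conjunctions, stripping the trailing $\$$) together with a well-founded induction on the measure $\measure$ of Section~\ref{sec:measure}; the subtlety is the nonuniform recursion into left quotients $a\backslash P$ when singleton languages appear, justified by the strict measure drop $\measure(\varphi_a) <_\Measure \measure(\varphi')$. The other half, Corollary~\ref{cor:Palindromes not good}, reduces goodness of $\Palindromes$ to Theorem~\ref{thm:dcfl neq palindromes} via the structural Lemmas~\ref{lem:boundR} and~\ref{lem:orthogonal word}. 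For the corollary I am writing, however, everything has already been discharged, so the proof is a two-line citation.

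\begin{proof}
  The inclusion \PDLplain[\DCFL] $\leq$ \PDLplain[\CFL] holds by~\eqref{eq:PDL chain}.
  Since $\Palindromes \in \text{\CFL}$ and $\$ \notin \Sigma$, the formula $\EF^{\Palindromes\;\$}\True$ is a \PDLplain[\CFL]-formula.
  Suppose for contradiction that there were an equivalent $\varphi \in \text{\PDLplain[\DCFL]}$.
  By Corollary~\ref{cor:XEF DCFL and Palindromes} the language $\Palindromes$ would be good,
  contradicting Corollary~\ref{cor:Palindromes not good}.
  Hence no such $\varphi$ exists and the inclusion is strict.
\end{proof}
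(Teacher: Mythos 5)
Your proposal is correct and follows exactly the paper's own argument: the paper proves this corollary by citing Corollaries~\ref{cor:XEF DCFL and Palindromes} and~\ref{cor:Palindromes not good}, which is precisely the contradiction you set up. Your additional remarks (the trivial inclusion from~\eqref{eq:PDL chain} and the observation that $\EF^{\Palindromes\;\$}\True$ is a genuine \PDLplain[\CFL]-formula) only make explicit what the paper leaves implicit.
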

\begin{proof}
  By Corollaries~\ref{cor:XEF DCFL and Palindromes} 
  and~\ref{cor:Palindromes not good}.
\end{proof}

\section{Conclusion and Further Work}
\label{sec:conclusion}

We proved that \PDLplain[\DCFL] is distinct from \PDLplain[\CFL]
by means of model and language theory.
Similar results---such as \CTL vs.\ Fairness~\cite{EmersonHalpern86},
\PDLplain vs.\ \PDL~\cite{Berman81}, and unary \CTL vs.\ unary \CTLplus~\cite{EmersonHalpern85}---uses
two sequences of transition systems which are indistinguishable for the smaller logic.
Their proofs are pretty compact.  So, is it possible to reformulate our proof in a similar way?
The main difficulty should be the incorporation of Theorem~\ref{thm:dcfl neq palindromes} into transition systems.

The considered logic is exactly the $\EF$-/$\AG$-fragment of the Extended Computation Tree Logic~\cite{nonregctl_lpar_2010}, say \XCTL{$\mathfrak L$}.
This observation poses at least two  questions.
First, is it possible to extend the separation from the unary fragment to the binary $\EU$-/$\AR$-fragment?
Here, the main challenge is the interpretation of $\E(\psi_1 \U^L \psi_2)$ in the sense of Definition~\ref{def:lang formula}
as $\psi_1$ could prohibit linear models: take $\E((p \wedge (\EF^{\Sigma} \neg p)) \U^L \psi_2)$ for instance.
Secondly, one could go from one of these fragments to the whole logic to obtain a separation of \XCTL{\DCFL} and \XCTL{\CFL}.
In addition to the mentioned difficulties, one is faced with the alternating quantifiers $\EG$ and $\AF$.
To achieve such a goal, note that the Theorems~\ref{thm:elimA} and~\ref{thm:wedgeEF elim} also hold for arbitrary path quantifications as long as $\emptyword$-freedom is guaranteed. 
An iteration of these tools along a given $\omega$-word could unravel 
an $\omega$-sequence of disjunctions of $\E$-formulas.  
Such a sequence could be a subject for a pumping lemma similar to Lemma~\ref{lem:pumpinglemma}.
The $\omega$-word could follow the lines of Theorem~\ref{thm:dcfl neq palindromes}.

Finally, a separation of the full \PDL (i.e.\, with tests) and of the $\Delta$-variants of \PDL~\cite{Streett82,LLS07} 
could provide more insight into the difference between the non-determinism in \CFL{}s and the non-determinism 
used in the translation of formulas into automata.

\bibliographystyle{eptcs} 
\bibliography{main}

\end{document}